\newcommand{\coloneqq}{:=}
\newcommand{\bs}{\boldsymbol}
\newcommand{\bb}{\mathbb}
\newcommand{\cl}{\mathcal}
\newcommand{\ie}{\emph{i.e.},\ }
\newcommand{\eg}{\emph{e.g.},\ }
\newcommand{\im}{{\sf i}}
\newcommand{\rv}{r.v.\@\xspace}
\newcommand{\ts}{\textstyle}
\newcommand{\ud}{{\rm d}}
\newcommand{\iid}{%
  \ifmmode
  \mathrm{i.i.d.}%
  \else%
  i.i.d.\@\xspace%
  \fi%
}
\newcommand{\scp}[2]{\langle #1, #2 \rangle}
\DeclareMathOperator{\disk}{disk}
\DeclareMathOperator{\SNR}{SNR}
\DeclareMathOperator{\WSNR}{WSNR}
\DeclareMathOperator{\Id}{\bs{I}}
\DeclareMathOperator*{\argmin}{arg\,min}
\newcommand{\R}[1]{\bb{R}^{#1}}
\newcommand{\TGV}[1][2]{\text{TGV}_\alpha^{#1}}
\DeclareMathOperator{\diag}{diag}
\setlist[enumerate]{leftmargin=.5in}
\setlist[itemize]{leftmargin=.5in}
\newtheorem{remark}{Remark}
\newtheorem{prop}{Proposition}
\newtheorem{definition}{Definition}
\crefname{hypothesis}{Hyp.}{Hypotheses}
\crefname{prop}{Prop.}{Propositions}
\crefname{def}{Def.}{Definitions}
\crefname{section}{Sec.}{Sections}
\title{Compressive lensless endoscopy with partial speckle scanning}
\author{S. Gu\'erit\thanks{ISPGroup, ELEN and INMA Departments, ICTEAM Institute, UCLouvain, Belgium 
  (\url{st.guerit@gmail.com}, \url{laurent.jacques@uclouvain.be}).},\; S. Sivankutty\thanks{Aix Marseille Univ, CNRS, Centrale Marseille, Institut Fresnel, Marseille, France (\url{siddharth@gmail.com}, \url{herve.rigneault@fresnel.fr}).},\; J. A. Lee\footnotemark[6]\ \thanks{Center of Molecular Imaging, Radiotherapy and Oncology (MIRO), IREC Institute, UCLouvain, Belgium (\url{john.lee@uclouvain.be})},\; H. Rigneault\footnotemark[3]\ \thanks{Lightcore Tehnologies, 37-39 Rue d'Antibes, Cannes, France.},\; and L. Jacques\footnotemark[2]\ \thanks{JAL and LJ are Senior Research Associates with the Belgian F.R.S.-FNRS.}}
\begin{document}

\maketitle

\begin{abstract}
The lensless endoscope (LE) is a promising device to acquire \emph{in vivo} images at a cellular scale. The tiny size of the probe enables a deep exploration of the tissues. Lensless endoscopy with a multicore fiber (MCF) commonly uses a spatial light modulator (SLM) to coherently combine, at the output of the MCF, few hundreds of beamlets into a focus spot. This spot is subsequently scanned across the sample to generate a fluorescent image. We propose here a novel scanning scheme, partial speckle scanning (PSS), inspired by compressive sensing theory, that avoids the use of an SLM to perform fluorescent imaging in LE with reduced acquisition time.
Such a strategy avoids photo-bleaching while keeping high reconstruction quality. We develop our approach on two key properties of the LE: (\emph{i}) the ability to easily generate speckles, and (\emph{ii}) the memory effect in MCF that allows to use fast scan mirrors to shift light patterns.
First, we show that speckles are sub-exponential random fields. Despite their granular structure, an appropriate choice of the reconstruction parameters makes them good candidates to build efficient sensing matrices. Then, we numerically validate our approach and apply it on experimental data. The proposed sensing technique outperforms conventional raster scanning: higher reconstruction quality is achieved with far fewer observations. For a fixed reconstruction quality, our speckle scanning approach is faster than compressive sensing schemes which require to change the speckle pattern for each observation.
\medskip

\noindent \textbf{\em Keywords}---Compressive sensing, lensless endoscope, lensless imaging, speckle imaging, speckle scanning.
\end{abstract}

\section{Introduction}
Nowadays, imaging and exploring the human body for clinical purposes is quite common. Every newborn already experienced at least one imaging modality during its fetal development through ultrasound scans. During their life, most people will encounter medical issues and will possibly undergo physical examinations based on imaging techniques.  \emph{In vivo} techniques are traditionally divided into structural and functional imaging. 

Although structural imaging has been around since the early days, functional imaging has been the field of tremendous research over the last decades both in the fields of nuclear medecine (\eg single photon emission computed tomography and positron emission tomography) and optical microscopy. Photonic approaches have the advantage to be simpler to implement. They avoid radioactive labels and can achieve better spatial resolution (sub-micron). However, they usually provide limited field-of-views (FOV) and limited penetration depth due to tissues scattering and absorption as compared to non-optical techniques \cite{Ntziachristos2010,Andresen2016}. To circumvent this latter limitation, optical endoscopy and endomicroscopy are being developed with the goal to provide both structural and functional information.

\label{sec:LE}
\subsection{Lensless endoscopes as ultrathin devices}
\label{sec:le_principle}

For instance, endomicroscopes, already used in clinical applications, use a fiber bundle as waveguide \cite{Thompson2011,Perperidis2020}. Recent developments in adaptive optics, optical fibers and computational imaging techniques have opened a new class of imaging systems called \emph{lensless endoscopes} (LE)~\cite{Cizmar2012,Andresen2016,Boominathan2016,Psaltis2016}. In these implementations, a \emph{single} fiber in combination with wavefront shaping devices is employed as an ultrathin imaging system. The extreme miniaturization of the imaging probe (diameter $\leq 200\,\mu$m)  offers a minimally invasive route to 
image at depths unreachable in microscopy.
In this paper, we focus our work on LE that use multicore fiber (MCF) \cite{Andresen2016} and are made with hundreds of individual single core fibers arranged in a single and monolithic silica waveguide \cite{Andresen2013}. Although the diameter of multicore LE fibers is larger than the diameter of multimode LE fibers \cite{Ploschner2015}, MCF can be made resilient to fiber bending \cite{Tsvirkun2019}. Most importantly, they exhibit a memory effect that allows
the output light pattern to be scanned by simply adding a phase tilt at the MCF entrance \cite{Thompson2011}.

\begin{figure}
  \centering
  \includegraphics{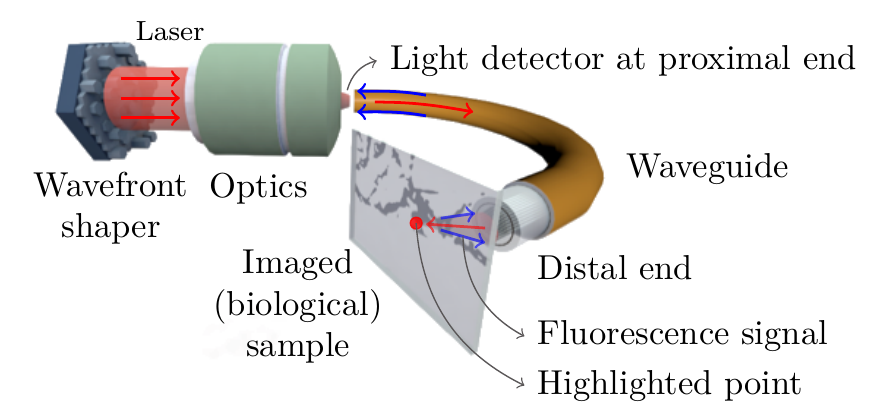}
	\caption{\label{fig:endoscope_principle}Lensless endoscope principle in raster scanning mode (\ie with a focused light beam). Excitation signal is red and backscattered fluorescence signal is blue. Source: Institut Fresnel$^\text{\ddag}$.}
      \end{figure}
      
As illustrated in \cref{fig:endoscope_principle}, an MCF LE consists of four main parts: a wavefront shaper, an optics part, an MCF and an optical detector. The role of the wavefront shaper is to appropriately shape the phase of the light that is injected into the individual cores. This results in the formation of 
specific illumination patterns at the distal end of the MCF. The optics part is made of mirrors and lenses. It is used to focus the light from the wavefront shaper into the individual cores. To collect the generated fluorescence, the MCF features a double cladding that collects and brings back the fluorescent light towards a high sensitive detector \cite{Andresen2013a}.
The sample image is finally reconstructed from the scanning of the focused beam across the sample (by applying a phase tilt at the MCF entrance) and the simultaneous signal collection with the high sensitive detector.
		
This imaging scheme is known as raster scanning (RS) \cite{Psaltis2016,Sivankutty2016}. 
It consists in scanning at constant rate each (discretized) position in the high sensitive FOV. 
For each position, the fluorescence signal is measured by the single pixel detector. The image is then readily reconstructed, pixel by pixel, without any extra post-processing step. As mentioned before, this scanning technique requires preliminary calibration of the SLM to generate a focus beam at the MCF output.
A focused illumination pattern is displayed in \cref{fig:psf_focused}.
This specific focused beam can be obtained when the fiber cores in the MCF are arranged in a Fermat's golden spiral shape. It provides a larger FOV for the LE \cite{Sivankutty2018} than periodically arranged cores.

\begin{figure}
	\centering
	    \subfloat[\label{fig:psf_focused}]{\includegraphics[height=0.23\linewidth]{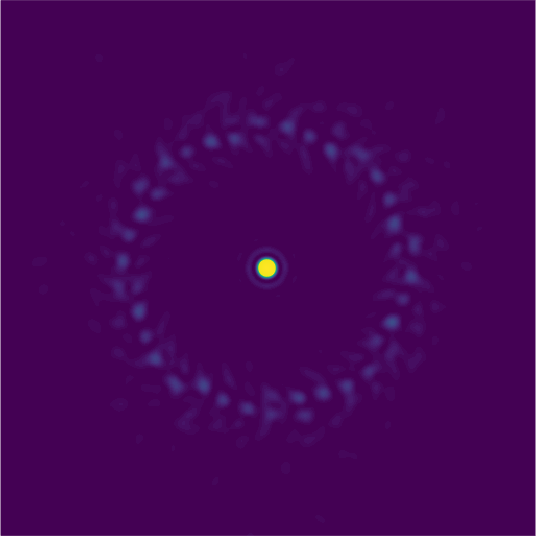}} \hfill
		\subfloat[\label{fig:psf_speckle}]{\includegraphics[height=0.23\linewidth]{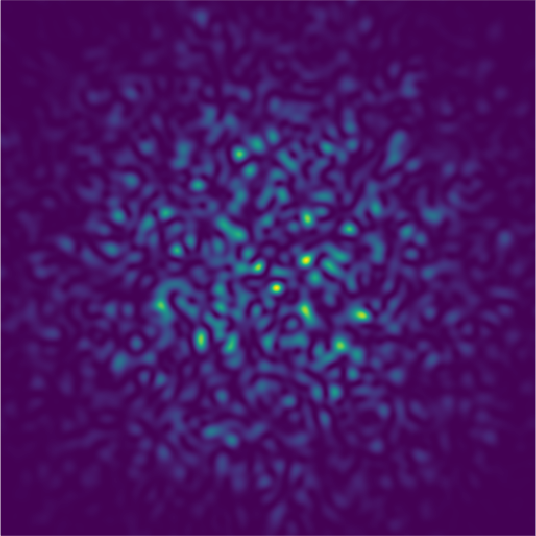}}\hfill
		\subfloat[\label{fig:S0}]{\includegraphics[height=0.23\linewidth]{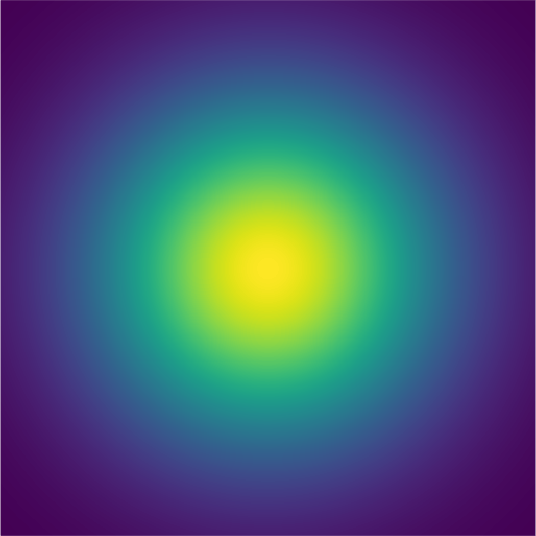}} \hfill
		\subfloat[\label{fig:tildeR}]{\includegraphics[height=0.23\linewidth]{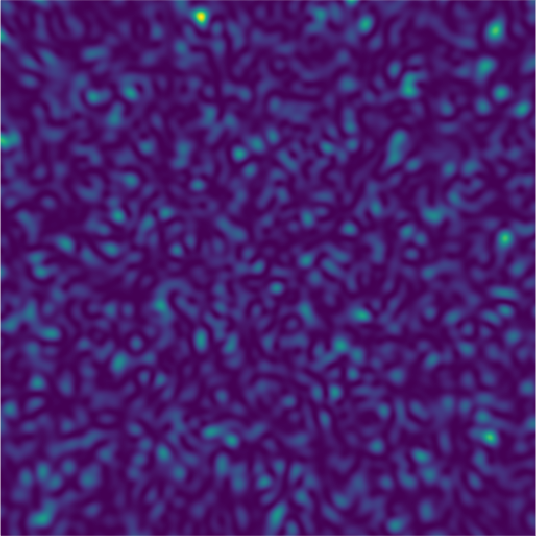}}
	\caption{(a) Focused light beam (brightened for visualization purpose) obtained with the LE with Fermat's golden spiral arranged MCF, (b) speckle pattern (realization of $S$), (c) mean speckle field $\bar S$, and (d) residual field (realization of $\tilde R$). Images were simulated with the following parameters (see \cref{sec:speckles}): $\lambda=1\,\mu$m, $z=500\,\mu$m, $3\sigma_c = 3.2\,\mu$m and $\varpi = 0.5\,\mu$m.}
	\label{fig:S0_full}
\end{figure}

	\subsection{Current challenges}
	\label{sec:current-challenges}
	
	RS acquisition provides fast image reconstruction but at the cost of some drawbacks, for instance: (\emph{i}) the SLM needs to be calibrated to get a focused beam, (\emph{ii}) imperfections of the focused beam are not corrected, and (\emph{iii}) it is necessary to collect as many observations as the number of pixels in the discrete representation of the image. 
	These considerations highlight constraints on the usage of LE in realistic situations: (\emph{i}) the device is sensitive to perturbations during the calibration and imaging, and (\emph{ii}) the slow update rates of conventional SLMs limit the imaging speed \cite{Andresen2013}. 
		
	Hence, the core question of this paper is the following: can we achieve accurate image reconstruction compared to RS using less measurements while keeping a short acquisition time? Without SLM calibration, the observed illumination pattern is a \emph{speckle}, a pattern with multiple bright grains of light and dark regions with no well-defined global peak (see \cref{fig:psf_speckle}). Such a pattern results from the interference of the multiple coherent light beams produced by each single mode cores when they are configured with arbitrary phases.
	In some applications, the presence of a speckle pattern is considered as an issue and methods were developed to suppress it or at least to minimize its effects \cite{Goodman2007}. But in some other imaging modalities, its random structure is appealing as it makes LE closer to recent compressive sensing and computational imaging procedures, \eg indirectly observing images with random sensing strategies~\cite{Candes2008,Duarte2008,Boominathan2016,Psaltis2016}.
	
	Back to 1996, Bolshtyansky \emph{et al.} studied an acquisition scheme very similar to ours \cite{Bolshtyansky1996} but involving a single multimode fiber (MMF). Pseudothermal ghost imaging also relies on a similar approach but uses a rotating diffuser to generate speckle patterns \cite{Katz2009}. Speckles have also been extensively used to perform super-resolution fluorescence microscopy, most often in a blind context  \cite{Mudry2012,Min2013,Kim2015}. 
	To compensate the fact that the patterns may be unknown, some works exploit the statistical properties of the speckles \cite{Mudry2012,Kim2015} or consider weak assumptions on the original fluorophores distribution \cite{Min2013}. Bertolotti \emph{et al.} studied a blind framework where the diffuser is completely opaque and prevents access to both the sample and the generated speckle \cite{Bertolotti2012}. To cover up this inaccessibility, they exploit the \emph{memory effect} summarized as follows: a small change in the incident angle of the light leads to a translation of the speckle pattern. In speckle scanning microscopy, Stasio \emph{et al.} also took advantage of this effect with an MCF fiber \cite{Stasio2016}.	
	Except in super-resolution, the fluorescence signal in the aforementioned applications is measured with a single-pixel detector (or a similar device). The image of interest is estimated via a weighted sum of the speckle patterns \cite{Bolshtyansky1996}, via the correlations of the observation vector with the illumination patterns \cite{Katz2009,Shapiro2012} or via the autocorrelation of the measurement vector \cite{Bertolotti2012,Stasio2016}. 	
	
	While the above-mentioned works consider a number of measurements at least equal to the number of pixels in the final image, several authors reported the development of compressive optical imaging techniques requiring far fewer observations. 
	These techniques are based on the seminal work of Cand\`es \emph{et al}. and Donoho on compressive sensing (CS) \cite{Candes2008,Donoho2006}. 
	The pseudo-randomness of the speckles and their easy generation make them interesting to build an efficient sensing matrix. Katz \emph{et al.} proposed a proof of concept for CS approach to pseudothermal ghost imaging including information about the image structure in the reconstruction algorithm \cite{Katz2009}. 
	A compressive approach was used in super-resolution microscopy \cite{Pascucci2017} where the authors reach higher resolution by using saturated illumination with speckle patterns. 	
	CS was also exploited in experimental setups involving an MCF and/or an MMF, \eg in fluorescence microscopy \cite{Shin2017,Caravaca-Aguirre2018a}, optical photoacoustic \cite{Caravaca-Aguirre2018a} or microendoscopy \cite{Choudhury2019}. Authors assume small total variation \cite{Shin2017} or sparsity in some wavelet basis \cite{Choudhury2019} as prior information on the image structure. 	
	In these three works, speckles patterns are recorded. This recording can be seen as a calibration step. However, as mentioned by \cite{Caravaca-Aguirre2018a}, this step is simpler than the conventional calibration via wavefront shaping because it only requires to measure the speckles intensities while beamforming requires speckle field measurements before the experiment.
	
	\subsection{Proposed approach and contributions}
	\label{sec:proposed_approach}
	
	The literature covered in the previous section shows that speckle illumination seems to be a promising way to face one of the main constraints preventing the use of LE: it removes the need for interferometric stability and costly calibration step before each acquisition, and it can possibly be used in a blind context. Experimental studies show that sensing matrices based on speckles patterns are valid candidates to acquire measurements in a compressive framework while still providing good reconstruction results.
	In fluorescence imaging, more than a low number of measurements, a short acquisition time is highly desirable to preserve the biological sample of photo-bleaching. Photo-bleaching is an irreversible damage resulting from light exposure and leading to a complete loss of fluorescence.
	If the MCF fiber has a low coupling between the cores, this time can be substantially reduced by exploiting the memory effect of the fiber thanks to scan mirrors (\eg galvanometric mirrors) \cite{Andresen2013a}. According to Andresen \emph{et al.}, only strategies exploiting this effect will be able to satisfy the speed and resolution requirements for \emph{in vivo} imaging with LE, as moving such mirrors is much faster than changing an SLM configuration \cite{Andresen2016}. 
	
        In this paper, we study a novel acquisition strategy named \emph{partial speckle scanning} (PSS). It combines elements of compressive sensing with the specific properties of the considered LE, namely its \emph{robustness to spatial and temporal distortion}, its extremely \emph{low coupling} between the cores of the fiber, and its \emph{memory effect}. In particular, we generate only $P \leq M$ distinct speckles 
        and we shift each of them $M_P = M/P \leq M$ times, thus achieving a \emph{partial} scanning of the object, to acquire a total of $M=P M_P$ observations. For $P=M$, we observe the sample with $M$ different speckle illuminations, while if $P=1$, a unique speckle pattern is shifted $M$ times.

	The first contribution of this paper consists in characterizing an accurate linear sensing model of PSS. In particular, we deduce its equivalent sensing matrix, explaining the relation between the fluorescent sample and the collected observations. We then justify the possibility to estimate an image from PSS observations by establishing links with CS theory. This requires us to first analyze the non-asymptotical distribution of a speckle random field as well as its autocorrelation. The second contribution consists in optimizing the sampling parameters ($M_P$, $P$, and the induced speckle shift between two consecutive scanning mirror orientations in a line scanning mode) to get the best reconstruction quality for a given acquisition time or a prescribed number of measurements $M$.

        \subsection{Outline}
        \label{sec:paper-structure}
        The rest of the paper is structured as follows. In \cref{sec:speckles}, we first determine a model for the illumination produced by the MCF in the object plane. We provide a far-field approximation of this illumination pattern in \cref{sec:principle-approx} in function of the electromagnetic field emitted by each core. We explain in \cref{sec:illum-transl} how we can shift this pattern by using scan mirrors. The formation of a focused illumination (used in RS, see~\cref{fig:endoscope_principle}) is explained in \cref{sec:focused-illum} before to introduce and (statistically) characterize in \cref{sec:speckl-illum,sec:speckle-distrib} the speckle patterns produced by randomly setting the core fields. In \cref{sec:model}, we focus on fluorescent imaging with LE. We detail how we can, in general, image a sample by recording the light emitted under illumination. We first develop a continuous forward model in \cref{sec:fluo-model,sec:acquisition_model} relating the illuminated sample---\ie the original fluorophore density---to the recorded observations. In \cref{sec:forward_model_discrete_rep}, we detail the discretization of this model, a mandatory step for any computational imaging method. In \cref{sec:estimation_f}, we then show how this discrete forward model is classically inverted in RS imaging, before to describe in \cref{sec:inverse_problem} a general image estimation algorithm. It inverts the (possibly ill-conditioned) discrete forward model by regularizing the produced image estimate (\eg using total variation or more advanced priors). This algorithm can be solved with iterative proximal methods (\cref{sec:algo}), and we describe in \cref{sec:choice_parameters} a cross-validation strategy to automatically balance the fidelity to the observations with this regularization. In \cref{sec:CS_strategies}, we develop two compressive imaging strategies for LE, namely speckle imaging (SI) in \cref{sec:UCS}, where a distinct random speckle pattern is generated for each observation, and partial speckle scanning (PSS) in \cref{sec:SCS}, where the sample is observed from partial scanning of a few randomly generated speckles. These strategies, their respective formulations as inverse problems and the results of the simulations are also presented in \cref{sec:UCS,sec:SCS}, respectively. Finally, \cref{sec:experiments} presents the experimental setup designed to compare different acquisition strategies on real fluorescent samples. We compare our compressive approaches, SI and PSS, to RS and explain how PSS allows us to reduce the acquisition time compared to SI while providing a similar reconstruction quality. Equivalently, we show that PSS permits to image the sample under limited acquisition time budget when SI fails. We conclude in \cref{sec:perspectives} and provide there possible future improvements for compressive LE imaging.

        \subsection{Conventions and notations} We find useful to introduce here the conventions and notations used throughout this work. Light symbols are used for scalars and functions, while bold symbols are used for vectors and matrices. The ``dot'' notation $f(\cdot; \eta)$ refers to the variability of the function $f$ relatively to the pointed parameter with other parameters (in $\eta$) fixed. The uniform distribution on a set $\cl S$ (\eg $\cl S = [a,b] \subset \bb R$ or the unit ball $\cl S = \bb B_2 \subset \bb R^2$) is $\cl U(\cl S)$. $\cl P(\mu)$ and $\cl N(\mu, \sigma^2)$ are the Poisson distribution with mean $\mu>0$ and Gaussian distribution with mean $\mu$ and variance $\sigma^2$, respectively. We also use the index set $[J] \coloneqq \{1, \cdots, J\}$; the Kronecker symbol $\delta_{j,k}$; the vectors of ones and zeros $\bs 1_d$ and $\bs 0_d$, respectively (the subscript $d$ is omitted when clear from the context); the scalar product $\scp{\bs a}{\bs b} = \bs a^\top \bs b$ between two vectors $\bs a$ and $\bs b$; the $\ell_2$-norm $\|\bs a\| = \sqrt{\scp{\bs a}{\bs a}}$ of a vector $\bs a$; the relation $\bs a \succeq \bs 0$ as a shorthand for $a_i \geq 0$ for all components $i$; the identity matrix $\Id_d$ in $\bb R^d$; Frobenius norm $\|\bs M\|_F$ and spectral norm $\|\bs M\|$ of a matrix $\bs M$; the function $\disk(\bs x)$ equal to 1 if $\|\bs x\| \leq 1$ and 0 otherwise; the symbols ``$^*$'' and ``$^\star$'' for complex conjugation and Legendre-Fenchel conjugate, respectively; the notation $A \lesssim B$ for meaning that $A \le c B$ for some constant $c>0$ independent of $A$ and $B$; and finally, the Fourier transform of $f$ defined as $\hat f(\bs k) \coloneqq \cl F[f](\bs k) = \int_{\bb R^2} f(\bs x)\, e^{-\im \bs k^\top \bs x}\,\ud \bs x$, with inverse $\cl F^{-1}[g](\bs x) = \frac{1}{(2\pi)^2} \int_{\bb R^2} g(\bs k) e^{\im \bs k^\top \bs x}\,\ud \bs k$.

\section{Illumination modeling}
\label{sec:speckles}
Before going into the details of the acquisition modeling and subsequent sensing strategies, let us have a close look at the models and principles explaining the illumination produced by the considered MCF. We first study how we can set the electromagnetic field of each MCF core to focus the illumination on a small spot on the object plane, before to consider the \emph{speckle illumination} formed by randomly configuring the core field. We also provide a statistical analysis of speckle illumination by studying the first- and second-order statistics of this random pattern as well as an characterization of its distribution. Our conclusions will serve the sensing models developped in \cref{sec:model}.  

\subsection{Principles and approximations}
\label{sec:principle-approx}

Given a laser beam of wavelength $\lambda$ injected into a MCF with $J$ cores (as illustrated in \cref{fig:endoscope_principle}), we consider an illumination pattern $S$ in a plane $\cl Z$, parallel to the planar MCF endface $\cl Z_0$, and at a distance $z>0$ from it.

This pattern results from the interferences of the electromagnetic fields radiated by the $J$ cores, each centered on a location $\bs q_j \in \bb R^2$ ($j \in [J]$) in $\cl Z_0$. The field of the $j^{\rm th}$ core in fiber endface is well described by a complex amplitude $\alpha_j$ with $|\alpha_j|=1$ multiplied by a narrow Gaussian envelope~\cite{Sivankutty2016}. These amplitudes are collectively represented by the vector $\bs \alpha \coloneqq (\alpha_1,\,\cdots, \alpha_J)^\top$.

In this context, the pattern $S$ is the intensity of the electromagnetic field $E_z$ radiated from $\cl Z_0$ to the plane $\cl Z$. Fourier optics tells us that, at each location $\bs x \in \cl Z \subset \bb R^{2}$, $E_z$ is obtained via the angular spectrum representation\footnote{Assuming optical field in homogeneous, isotropic, linear and source-free medium \cite{Goodman2005}.}~\cite{Goodman2005}:
\begin{equation}
  \label{eq:angspecrepr} 
  \ts E_z(\bs x; \bs \alpha) = (E_0(\cdot; \bs \alpha) * H_z)(\bs x), 
\end{equation}
where $E_0(\cdot; \bs \alpha)$ is the electromagnetic field emanating from the MCF endface,
\begin{equation}
  \ts E_0(\bs x; \bs \alpha) = \left[U_0 * \sum_{j=1}^J \alpha_j \delta(\bs\cdot - \bs q_j)\right](\bs x),
  \label{eq:E0}
\end{equation}
and $H_z$ is the inverse Fourier transform of the angular spectrum propagator defined as $\hat{H}_z(\bs k) = \exp(- \im z (k^2 - \|\bs k\|^2)^{1/2})$ with $k=2\pi/\lambda$ \cite{Goodman2005}. The vector $\bs k = (k_x, k_y)^\top$ is the 2-D coordinate in the reciprocal space of $\bs x$ and $U_0$ is the field emitted by a single core. The field $U_0$ is approximately Gaussian, \ie $U_0(\bs x) \approx \exp(- \|\bs x\|^2/(2 \sigma_{\rm c}^2))$, with standard deviation $\sigma_c$ depending on wavelength $\lambda$ and related experimentally to mode-field diameter $d$ through $d=2.35 \sigma_{\rm c}$ \cite{Sivankutty2016}. The illumination pattern on a plane located at a distance $z$ of the fiber distal end is therefore
\begin{equation}
  \label{eq:Sx}
  \ts S(\bs x; \bs \alpha) \coloneqq |E_z(\bs x; \bs \alpha)|^2 = \big|\sum_{j=1}^J \alpha_j U_z(\bs x - \bs q_j)\big|^2,
\end{equation}
with $U_z(\bs x) \coloneqq (U_0 * H_z)(\bs x)$.   This last expression can be simplified with the \emph{far-field} assumption. This assumption consider that the illumination pattern is far from the MCF endface, \ie $z \gg k D^2$ with $D$ the diameter of the MCF distal end. In this case, the Fraunhofer approximation holds \cite{Goodman2005}. Combined with the paraxial approximation which supposes a narrow FOV compared to $z$, \ie $\|\bs x\| \ll z$, we express $E_z$ as a modulated scaling of the Fourier transform $\hat E_0$ (computed relatively to $\bs x$) \cite{DiMarzio2020},
\begin{equation}
  E_z(\bs x; \bs \alpha) \approx \ts -\frac{e^{-\im k z}}{\im\lambda z}e^{-\frac{\im k}{2z} \|\bs x\|^2} \hat E_0\big(-\frac{k}{z}\bs x; \bs \alpha\big).
  \label{eq:fraunhofer}
\end{equation}
Therefore, using \cref{eq:E0} and the convolution theorem, the far-field approximation of the illumination $S$ reads
\begin{equation}
  \ts S(\bs x; \bs \alpha) \approx  \frac{1}{(\lambda z)^2}\big|\hat U_0\big(-\frac{2\pi}{\lambda z}\,\bs x\big)\big|^2\Big|\sum_{j=1}^J \alpha_j e^{\frac{2\pi\im}{\lambda z}\bs q_j^\top \bs x}\Big|^2.
  \label{eq:fraunhofer_speckle}
\end{equation}

We introduce now a convenient rewriting of \cref{eq:Sx}. We first define the \emph{mean intensity} $\bar S$ corresponding to non-interacting fields, \ie to the sum of all intensity fields produced by each core in $\cl Z$:
\begin{equation}
  \label{eq:Sbar-def}
  \ts \bar S(\bs x) \coloneqq \sum_{j=1}^J |U_z(\bs x - \bs q_j)|^2 = \big[|U_z|^2 * \text{AF}\big](\bs x),
\end{equation}
where $\text{AF}(\bs x)\coloneqq \sum_{j=1}^J \delta(\bs x  - \bs q_j)$ is the array factor (AF) related to the spatial arrangement of the cores~\cite{Sivankutty2016}.
Quantity $|U_z|^2$ only depends on wavelength $\lambda$, the diameter of each core, and distance $z$. \cref{fig:S0} represents $\bar S$ for a Fermat's spiral core arrangement. Regarding \cref{eq:fraunhofer_speckle}, an approximation for the mean intensity is
\begin{equation}
  \ts\bar S(\bs x) \approx  \frac{J}{(\lambda z)^2}\big|\hat U_0\big(-\frac{2\pi}{\lambda z}\, \bs x\big)\big|^2.
  \label{eq:fraunhofer_Sbar}
\end{equation}

From \cref{eq:Sx}, since the illumination $S$ amounts to summing $\bar S$ and all the cross-terms $\sum_{j,k=1; j\neq k}^J \alpha_j\alpha_k^* U_z(\bs x - \bs q_j)U^*_z(\bs x - \bs q_k)$, we can study the variations of $S$ around $\bar S$ by defining the residual field 
  \begin{equation}
    \tilde R(\bs x; \bs \alpha) \coloneqq \bar S(\bs x)^{-1} \big(S(\bs x; \bs \alpha) - \bar S(\bs x)\big),\ \text{such that}\ S(\bs x; \bs \alpha) = \bar S(\bs x) \big(1 + \tilde R(\bs x; \bs \alpha) \big).
    \label{eq:final_Sx}
  \end{equation}
  This field accounts for the (constructive and destructive) interferences between the individual fields emitted by the cores. Under the Fraunhofer approximation, \cref{eq:fraunhofer_Sbar,eq:fraunhofer_speckle} provide
  \begin{equation}
    \label{eq:resid-field-F-app}
    \ts \tilde R(\bs x; \bs \alpha) \approx \tfrac{1}{J} \sum^J_{j,k=1} \alpha_j\alpha_k^* e^{\frac{2\pi\im}{\lambda z}  (\bs q_j - \bs q_k)^\top \bs x} - 1.
  \end{equation}
As expressed in \cref{eq:final_Sx}, $\bar S$ acts like a \emph{vignetting} window, an envelope on the intensity variations encoded in $\tilde R$ for any configurations of $\bs \alpha$ (see \cref{fig:S0,fig:tildeR}). Moreover, \cref{eq:resid-field-F-app} shows  that $|\tilde R(\bs x; \bs \alpha)| = \cl O(J)$ and we prove in~\cref{sec:speckle-distrib} that, for random complex amplitudes $\bs \alpha$, \ie when $S$ is a speckle pattern, the probability that $|\tilde R(\bs x; \bs \alpha)|$ strongly deviates from a threshold $t>0$ decays exponentially fast when $t$ increases.

\subsection{Translating the illumination}
\label{sec:illum-transl}

We can translate any illumination pattern produced by the MCF in the object plane $\cl Z$ by leveraging the low coupling between the $J$ MCF cores, \ie the MCF \emph{memory effect} \cite{Andresen2013}. This expresses the fact that, up to an additive constant term depending on intrinsic properties of the core, the phase of the light emitted from $j^{\rm th}$ core is the same as the phase of the input light~\cite{Andresen2016}. Therefore, by modulating the light incident to the MCF---by using galvanometric scan mirrors, as in \cref{fig:endoscope_principle}---we act on the complex amplitude of each core field at the distal end of the fiber, and we can shift the illumination pattern at distance $z$ of the fiber. In particular, we can translate the residual field $\tilde R$ defined in \cref{eq:final_Sx} within the intensity vignetting imposed by the mean speckle field $\bar S$.
  
This effect, which is valid for any configuration of the complex amplitudes, is easily explained as follows. A relative tilt $\bs \theta = (\theta_x,\theta_y)^\top$ of the scan mirrors modifies the light optical path on the different core locations and induces a phase ramp\footnote{We here adopt a normalization of the phase ramp by $\frac{2\pi}{\lambda z}$ that eases our next developments; the actual tilt is $\bs \theta' = \frac{\lambda z}{2\pi} \bs\theta$.} $\frac{2\pi}{\lambda z} \bs\theta^\top \bs x$ on the complex amplitude. The vector $\bs \alpha$ is modified as
  \begin{equation}
    \label{eq:gamma-mod-def}
  \bs \alpha' \coloneqq \diag(\bs \gamma(\bs \theta))\,\bs \alpha,\quad \text{with}\ \gamma_k(\bs \theta) \coloneqq e^{\frac{2\pi \im}{\lambda z} \bs \theta^\top \bs q_k},\ k \in [J].    
  \end{equation}
 Therefore, considering the far-field approximation of the residual field \cref{eq:resid-field-F-app}, which is valid using the Fraunhofer and the paraxial approximations, this amplitude modulation changes the residual field into 
  \begin{align}
    \ts \tilde R(\bs x; \bs \alpha')&\ts \approx \tfrac{1}{J} \sum^J_{j,k=1} \alpha'_j{\alpha'_k}^* e^{\frac{2\pi\im}{\lambda z}  (\bs q_j - \bs q_k)^\top \bs x} - 1\nonumber\\
                               &\ts = \tfrac{1}{J} \sum^J_{j,k=1} \alpha_j\alpha_k^* e^{\frac{2\pi\im}{\lambda z}  (\bs q_j - \bs q_k)^\top (\bs x + \bs \theta)} - 1 \approx \tilde R(\bs x + \bs \theta; \bs \alpha),
                                 \label{eq:tildeR_translation}
  \end{align}
  so that, from \cref{eq:final_Sx}, the speckle field becomes           
  \begin{equation}
    S (\bs x; \bs \alpha') \approx \bar S(\bs x) (1 + \tilde R(\bs x + \bs \theta; \bs \alpha) ).
    \label{eq:Sx_tilde_translation}
  \end{equation}
Thus, provided that the far-field conditions are respected, \ie if $\|\bs x\| \ll z$ and $z \gg D_0^2/\lambda$, a non-zero tilt $\bs \theta$ induces a shift $-\bs \theta$ of the residual field, but the vignetting window remains unchanged. 

\subsection{Focused illumination}
\label{sec:focused-illum}

For some arrangements of the fiber cores, such as the Fermat's spiral configuration~\cite{Sivankutty2016}, we can focus the intensity pattern $S$ on a narrow intensity spot (see \cref{fig:psf_focused}). From a convenient calibration of the optical system~\cite{sivankutty2019non}, this focused beam is obtained by programming the SLM so that each core field has unit complex amplitude in the fiber endface. This induces constructive interferences of the propagated core fields in the origin of the object plane $\cl Z$. Mathematically, this can be seen by adapting \cref{eq:Sx}, or \cref{eq:fraunhofer} in the Fraunhofer approximation, to this particular unit amplitude configuration, which provides 
\begin{align}
  \ts S_{\rm foc}(\bs x) \coloneqq S(\bs x; \bs 1_J) = |\sum_{j=1}^J U_z(\bs x - \bs q_j)|^2 &\ts \approx \frac{1}{(\lambda z)^2}\big|\hat U_0\big(-\frac{2\pi}{\lambda z}\,\bs x\big)\big|^2\,\Big|\sum_{j=1}^J e^{\frac{2\pi\im}{\lambda z}\bs q_j^\top \bs x}\Big|^2\nonumber\\
                                                                                          &\ts = \frac{1}{(\lambda z)^2}\big|\hat U_0\big(-\frac{2\pi}{\lambda z}\,\bs x\big)\big|^2\,\big| \cl F[ {\rm AF}](\frac{2\pi}{\lambda z} \bs x)\big|^2\nonumber\\
                                                                                          &\ts = J^{-1}\, \bar S(\bs x)\ \big| \cl F[ {\rm AF}](\frac{2\pi}{\lambda z} \bs x)\big|^2.
                                                                                            \label{eq:focused-beam-def}
\end{align}
In the far-field, the focused beam $S_{\rm foc}(\bs x)$ is thus also vignetted by $\bar S$; it is given by the energy spectral density of the array factor of the fiber cores, as represented in \cref{fig:psf_focused} for the focused beam achieved by Fermat’s spiral core arrangement~\cite{Sivankutty2016}. 

\subsection{Speckle illumination}
\label{sec:speckl-illum}

Speckle patterns were observed for the first time at the end of the 19$^{\rm th}$ century from astronomical observations~\cite{Exner1905,Goodman2007}. They result from the interferences between coherent light components with random relative delays, as induced by light reflection on a rough surface. In this work, we propose to generate such speckle pattern with the MCF by randomly configuring the complex amplitudes $\bs \alpha$ of the intensity field $S(\bs x, \bs \alpha)$ in \cref{eq:Sx}. We set these amplitudes so that their phases follow a uniform distribution on the complex circle, \ie $\alpha_j \sim_{\iid}\,\exp\big(\im\, \cl U([0, 2\pi])\big)$ for $j \in [J]$. This random configuration can be reached by properly shaping the laser beam thanks to the SLM (see \cref{fig:endoscope_principle}). 

In this random configuration, we easily show that, from \cref{eq:Sx}, the mean intensity $\bar S$ is also the expectation of $S$ since $\bb E \alpha_j \alpha_k^* = \delta_{jk}$, \ie 
\begin{equation}
 \ts \bb E S(\bs x, \bs \alpha) = \bb E \big|\sum_{j=1}^J \alpha_j U_z(\bs x - \bs q_j)\big|^2 = \sum_{j=1}^J \big |U_z(\bs x - \bs q_j)\big|^2 = \bar S(\bs x).
\end{equation}

The residual field $\tilde R$ (defined in \cref{eq:final_Sx}) of such a speckle pattern exhibits a spatial granular structure indicating that, for two points $\bs x, \bs x'$ close to each other, the two \rv{s} $\tilde R(\bs x; \bs \alpha)$ and $\tilde R(\bs x'; \bs \alpha)$ are correlated. Given the diameter $D$ of the MCF fiber, we show below that such a correlation exists as soon as $\|\bs x - \bs x'\| \lesssim \lambda z/D$, which shows that the size of a ``speckle grain'' scales like $\cl O(\lambda z/D)$. 
This is achieved by studying the second-order statistics of the random field $\tilde R$: the spatial autocorrelation defined as 
\begin{equation}
  \ts \Gamma_{\tilde R}(\bs x,\bs x + \bs \tau) \coloneqq \bb E [\tilde R(\bs x; \bs \alpha)\tilde R^*(\bs x + \bs \tau; \bs \alpha)],
  \label{eq:gamma_R}
\end{equation}
with $\bs \tau \coloneqq (\tau_x,\tau_y)^\top$. Using the approximation \cref{eq:resid-field-F-app}, $\bb E[\alpha_j\alpha_k^*] = \delta_{jk}$, and $\bb E[\alpha_j\alpha_k^*\alpha_l^*\alpha_m] = \delta_{jk}\delta_{lm} + \delta_{jl}\delta_{km} - \delta_{jklm}$, we find 
\begin{align}
  \Gamma_{\tilde R}(\bs x,\bs x+\bs \tau) & \approx \ts \frac{1}{J^2} \sum_{j,k,l,m} \bb E[\alpha_j\alpha_k^*\alpha_l^*\alpha_m] e^{\frac{2\pi\im}{\lambda z}(\bs q_j - \bs q_k)^\top \bs x} e^{-\frac{2\pi\im}{\lambda z}(\bs q_l - \bs q_m)^\top (\bs x + \bs\tau)}\ -\ 1 \nonumber\\
                                          & \ts = \frac{1}{J^2}\sum_{j,l=1}^J 1 + \frac{1}{J^2}\sum_{j,k=1}^J e^{-\frac{2\pi\im}{\lambda z}(\bs q_j - \bs q_k)^\top \bs\tau} -\frac{1}{J^2}\sum_{j=1}^J 1\ -\ 1 \nonumber\\
                                          & \ts = \frac{1}{J^2} \big|\sum_{j=1}^J e^{-\frac{2\pi\im}{\lambda z} \bs q_j ^\top \bs\tau}\big |^2 - \frac{1}{J} \nonumber\\
                                          &\ts =  \frac{1}{J^2} \Big|\mathcal F\big[\text{AF}\big]\big(\frac{2\pi}{\lambda z}\bs\tau\big)\Big|^2 - \frac{1}{J} \approx \Gamma_{\tilde R}(\bs 0, \bs \tau) =:~\Gamma_{\tilde R}(\bs \tau). \label{eq:gamma_R_final}
\end{align}
Therefore, the Fraunhofer approximation shows us that the autocorrelation $\Gamma_{\tilde R}(\bs x,\bs x+\bs \tau) \approx \Gamma_{\tilde R}(\bs \tau)$ does not depend on $\bs x$ anymore. We also see that, in the far-field, $\Gamma_{\tilde R}$ is directly related to the Fourier transform of the array factor AF, and it displays a peak on the origin with $\Gamma_{\tilde R}(\bs 0) = 1 - J^{-1} \leq 1$. Note that the Fermat's spiral configuration is designed for ensuring small variations of $\Gamma_{\tilde R}$ far from the origin, thus keeping only a dominant peak in $\bs 0$~\cite{Sivankutty2016}.  

The decay of $\Gamma_{\tilde R}$ when $\tau = \|\bs \tau\|$ increases can be further estimated by assuming the core locations homogeneously distributed over 2-D fiber endface. Following~\cite{Goodman2007}, assuming these locations distributed uniformly at random over 2-D fiber endface of diameter $D$, \ie $\bs q_j \sim_{\iid} \cl U(\frac{1}{2} D \bb B^2)$ with $\bb B^2$ the unit ball in $\bb R^2$, we find, for $J$ large and $\bs q, \bs q' \sim_{\iid} \cl U(\frac{1}{2} D \bb B^2)$,
\begin{align}
  \ts \Gamma_{\tilde R}(\bs \tau)&\ts \approx \frac{1}{J^2} \sum_{j,k=1; j\neq k}^J e^{-\frac{2\pi\im}{\lambda z} (\bs q_j - \bs q_k)^\top \bs\tau}\nonumber\\
  &\ts \approx \frac{J-1}{J} \bb E e^{-\frac{2\pi\im}{\lambda z} (\bs q - \bs q')^\top \bs\tau} = \frac{J-1}{J} \big|2(\tfrac{\pi  D}{\lambda z} \tau)^{-1} J_1(\frac{\pi D}{\lambda z} \tau)\big|^2,
  \label{eq:gammaR_circular}
\end{align}
where $J_1$ is the first-order Bessel function of the first kind, and $\lim_{s \to 0} J_1(s)/s = 1/2$.
Following \cite{Goodman2007}, we define the average area of a speckle grain as 
$\mathcal A \coloneqq \iint_{-\infty}^\infty \big(\Gamma_{\tilde R}(\bs \tau) / \Gamma_{\tilde R}(\bs 0)\big) \ud\bs\tau$. For a circular aperture as considered above, $\mathcal A$ is obtained using Parseval's identity and the average radius is defined as $r \coloneqq \sqrt{\mathcal A/\pi}$, which gives
\begin{equation}
  \ts \mathcal A = \frac{(\lambda z)^2}{\pi(D/2)^2} \quad \text{and} \quad r = \frac{\lambda z}{\pi(D/2)}.
  \label{eq:speckle_radius}
\end{equation}
We thus get $r \propto \lambda z/D$. For the parameters of the simulation in \cref{fig:autocorrelations}, the radius of a speckle grain is $2.8\,\mu$m.

\begin{remark}
Quantities \cref{eq:speckle_radius,eq:gammaR_circular} are easily computable since they only depend on the experiment parameters $J$, $D$, $\lambda$ and $z$. \cref{fig:autocorrelations} shows $\Gamma_{\tilde R}$ as a function of $\bs\tau$ (with $\bs\tau = \tau \bs e_1$ and $\bs e_1$ the unit vector aligned with the horizontal axis of the 2-D image) obtained with Fraunhofer approximation and the stronger assumption of a circular spot. Mean empirical estimations of $\Gamma_{\tilde R}$ based on 1\,000 realizations of the residual field are also displayed for $\bs x = \bs 0$.
  Both theoretical approximations explain well the main peak of the autocorrelation, that is similar to the peak observed when the light beam is focused (see \cref{fig:psf_focused}). Even if distant side lobes are absent with circular aperture assumption, \cref{eq:gammaR_circular} still provides a reliable closed form expression for the radius of a speckle grain. 
\end{remark}

\begin{remark}
Interestingly, \cref{eq:gamma_R_final} shows us that the focused beam \cref{eq:focused-beam-def} is actually related to the autocorrelation $\Gamma_{\tilde R}$ of a random speckle, \ie  $S_{\rm foc}(\bs x)/\bar S(\bs x) \approx J \Gamma_{\tilde R}(\bs x) + 1$ since  $J^2\Gamma_{\tilde R}(\bs \tau) + J \approx |\mathcal F[\text{AF}](\frac{2\pi}{\lambda z}\bs\tau\big)|^2$.
\end{remark}

\begin{figure}
  \centering
  \includegraphics{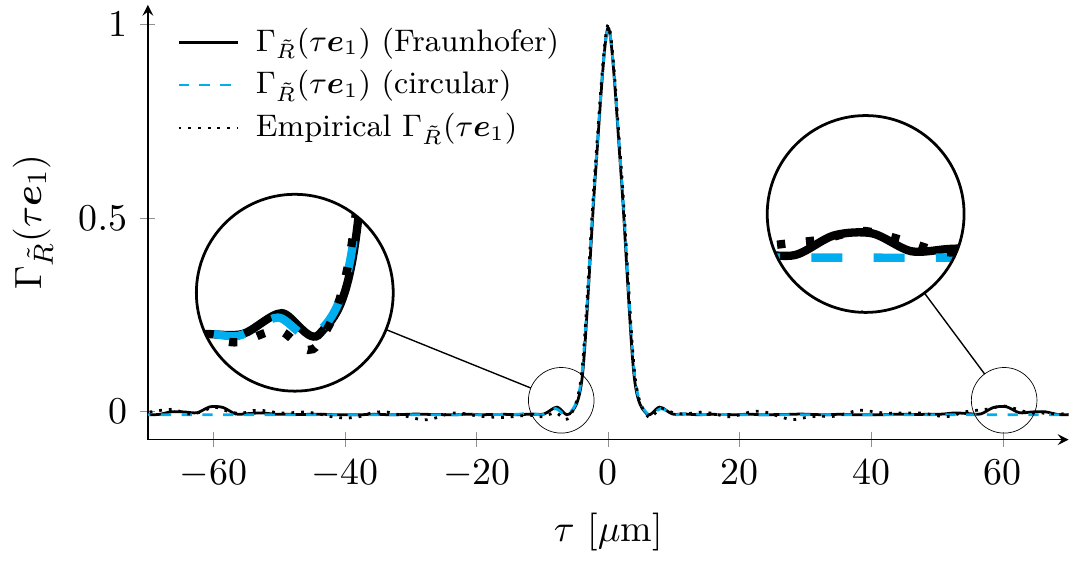}
  \caption{\label{fig:autocorrelations}Autocorrelation of field $\tilde R$ for three cases: (\emph{i}) using Fraunhofer approximation (solid line), (\emph{ii}) using Fraunhofer approximation and assumption of a circular aperture (dashed cyan line), and (\emph{iii}) by empirically computing mean (over 25 trials) autocorrelation based on 1,000 realizations of $\tilde R$ (dotted line). The fiber core arrangement is a Fermat's golden spiral with $J=120$ cores, diameter $D=113\,\mu$m and $3\sigma_c = 3.2\,\mu$m. Experiment parameters are $\lambda=1\,\mu$m, $z=500\,\mu$m, and $\varpi = 2\,\mu$m.}
\end{figure}

\subsection{Speckle distribution}
\label{sec:speckle-distrib}

In \cite{Goodman2007}, Goodman derives the probability density function of a speckle with a large number of phasors (corresponding to the cores in our case). The use of the central limit theorem leads to an intensity $S(\bs x; \bs \alpha)$ distributed according to an exponential law \cite[chapter 3]{Goodman2007}. This is valid in the \emph{asymptotic} case when the number of cores tends to infinity. In this section, we show that the (non-asymptotic) distribution of $S(\bs x; \bs \alpha)$ is sub-exponential when we choose \iid complex amplitudes $\alpha_j$, no matter their distribution. 

Following \cite{Vershynin2010,Vershynin2016a}, a \emph{sub-exponential} (or \emph{sub-Gaussian}) \rv $X \in \bb C$ is characterized by a finite sub-exponential (resp.\,sub-Gaussian) norm $\|X\|_{\psi_1}$ (resp.\,$\|X\|_{\psi_2}$) defined by
  \begin{equation}
    \label{eq:subexp-norm}
    \ts \|X\|_{\psi_p}\ \coloneqq\ \sup_{q\ge 1} q^{-1/p} (\bb E |X|^q)^{1/q},\quad p \in \{1,2\}.  
  \end{equation}
  For instance, Gaussian and Laplacian \rv{s} have finite sub-Gaussian and sub-exponential norms, respectively. For a bounded \rv both norms are finite, \ie  if $|X| \leq C$, then $\|X\|_{\psi_p} \leq C$ for $p \in \{1,2\}$; for our complex amplitudes, $\|\alpha_j\|_{\psi_1} \leq 1$ and $\|\alpha_j\|_{\psi_2}  \leq 1$ for $j \in [J]$.

  These norms are totally equivalent to characterize the tail distribution of their \rv{s}. In fact, for $p \in \{1,2\}$, the norm $K \coloneqq \|X\|_{\psi_p}$ is finite \emph{if and only if} the tail distribution of $X$ satisfies $\bb P \{|X| \geq t \} \leq 2 e^{- c t^p /K^p}$ for all $t\ge 0$, for some universal constant $c>0$. This is similar to the known tail distributions of Gaussian or Laplacian \rv{s} for $p=2$ and $p=1$, respectively.

  \begin{prop} Given a location $\bs x$ and $J$ \iid complex random amplitudes $\{\alpha_j, j \in [J]\}$ with unit modulus, both the speckle intensity $S(\bs x; \bs \alpha)$ and the residual field $\tilde R(\bs x; \bs \alpha)$ are sub-exponential, and 
    \begin{equation}
      \ts	\|S(\bs x; \bs \alpha)\|_{\psi_1} \lesssim \bar S(\bs x)\quad \text{and}\quad \|\tilde R(\bs x; \bs \alpha)\|_{\psi_1} \lesssim 1.
      \label{eq:Sx_tildeR_subexp_norm}
    \end{equation}
    \label{prop:subexp_Sx_tildeR}
  \end{prop}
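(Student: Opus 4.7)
The plan is to write $S(\bs x; \bs \alpha) = |W|^2$ with $W \coloneqq \sum_{j=1}^J \alpha_j U_z(\bs x - \bs q_j)$, and to deduce the sub-exponential bound on $S$ from a sub-Gaussian bound on $W$. In the setting of \cref{sec:speckl-illum} the phases are uniform on the complex circle, so $\bb E \alpha_j = 0$ and $\bb E |W|^2 = \sum_j |U_z(\bs x - \bs q_j)|^2 = \bar S(\bs x)$ (no cross terms).

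First I would show that $W$ is sub-Gaussian with $\|W\|_{\psi_2}^2 \lesssim \bar S(\bs x)$. Decomposing $W = \mathrm{Re}(W) + \im\,\mathrm{Im}(W)$, each real part is a sum of $J$ independent, centered real summands, the $j$th being almost surely bounded by $|U_z(\bs x - \bs q_j)|$. Hoeffding's inequality for bounded centered rvs \cite{Vershynin2010} then yields $\|\mathrm{Re}(W)\|_{\psi_2}^2,\, \|\mathrm{Im}(W)\|_{\psi_2}^2 \lesssim \sum_j |U_z(\bs x - \bs q_j)|^2 = \bar S(\bs x)$, and hence $\|W\|_{\psi_2}^2 \lesssim \bar S(\bs x)$.

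Second, I would upgrade this to a sub-exponential bound on $S = \mathrm{Re}(W)^2 + \mathrm{Im}(W)^2$. The standard relation $\|X^2\|_{\psi_1} \lesssim \|X\|_{\psi_2}^2$ for real rvs, combined with the triangle inequality for $\|\cdot\|_{\psi_1}$, gives $\|S(\bs x; \bs \alpha)\|_{\psi_1} \lesssim \bar S(\bs x)$, which is the first claim. The second then follows from \cref{eq:final_Sx}: since $\tilde R = \bar S(\bs x)^{-1} S - 1$, the triangle inequality and the fact that $\|c\|_{\psi_1} = |c|$ for deterministic constants give $\|\tilde R(\bs x; \bs \alpha)\|_{\psi_1} \leq \bar S(\bs x)^{-1} \|S(\bs x; \bs \alpha)\|_{\psi_1} + 1 \lesssim 1$.

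The delicate step is the sharp $\sqrt{\bar S(\bs x)}$ factor in the sub-Gaussian norm of $W$: a naive triangle inequality on $\|\cdot\|_{\psi_2}$ would only yield $\|W\|_{\psi_2} \lesssim \sum_j |U_z(\bs x - \bs q_j)|$, which is up to a factor $\sqrt{J}$ larger than desired (by Cauchy--Schwarz). The correct scaling crucially relies on both the independence and the centering of the $\alpha_j$ so that Hoeffding's inequality applies; the subsequent manipulations of the Orlicz norms $\|\cdot\|_{\psi_p}$ are then routine.
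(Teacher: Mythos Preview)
Your proposal is correct and follows essentially the same route as the paper's proof: both establish $\|W\|_{\psi_2}^2 \lesssim \bar S(\bs x)$ via the rotation-invariance property of sub-Gaussian norms for independent centered summands (the paper invokes \cite[Lemma~5.9]{Vershynin2010}, which is precisely the abstract form of your Hoeffding argument), then pass to $\|S\|_{\psi_1}$ via $\|X^2\|_{\psi_1}\lesssim\|X\|_{\psi_2}^2$ \cite[Lemma~5.14]{Vershynin2010}, and finish with the same triangle-inequality step for $\tilde R$. The only cosmetic difference is that the paper works directly with the complex-valued $\psi_2$ norm of $W$ rather than first splitting into real and imaginary parts.
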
            
  \begin{proof}
    Defining $\bs u(\bs x) \coloneqq \big (U_z(\bs x - \bs q_1), \cdots, U_z(\bs x - \bs q_J)\big)^* \in \bb C^J$, we can rewrite \cref{eq:Sx} as $S(\bs x; \bs \alpha) = |\scp{\bs \alpha}{\bs u(\bs x)}|^2$.
    From \cite[Lemma 5.14]{Vershynin2010}, $\|X\|^2_{\psi_2} \leq \|X^2\|_{\psi_1} \leq 2 \|X\|^2_{\psi_2}$, therefore,  
    $\||\scp{\bs \alpha}{\bs u(\bs x)}|^2\|_{\psi_1} \leq 2 \| \scp{\bs \alpha}{\bs u(\bs x)}\|^2_{\psi_2}$. Moreover, from \cite[Lemma 5.9]{Vershynin2010}, for $J$ \iid sub-Gaussian \rv{s} $X_j$, we have the approximate \emph{rotation invariance} $\|\sum_{j=1}^J X_j\|^2_{\psi_2} \lesssim \sum_{j=1}^J \|X_j\|^2_{\psi_2}$. Note that, for $X_j \coloneqq \alpha_j u^*_j(\bs x)$, $\scp{\bs \alpha}{\bs u(\bs x)} = \sum_{j=1}^J X_j$, using $\|\alpha_j\|_{\psi_2} \leq 1$ and the fact that ${\|\cdot\|_{\psi_2}}$ is a norm, this involves that
    $$
    \ts \| \scp{\bs \alpha}{\bs u(\bs x)}\|^2_{\psi_2} \lesssim  \sum_{j=1}^J \| X_j \|^2_{\psi_2} = \sum_{j=1}^J \| \alpha_j\|^2_{\psi_2} |u_j(\bs x)|^2 \leq \sum_{j=1}^J |u_j(\bs x)|^2 = \bar S(\bs x),
    $$
    which proves the first inequality in \cref{eq:Sx_tildeR_subexp_norm}. Regarding the second inequality, since $\|\cdot\|_{\psi_1}$ is a norm, it respects the triangular inequality, and from \cref{eq:final_Sx}, we get
    \begin{align*}
      \ts \|\tilde R(\bs x; \bs \alpha)\|_{\psi_1}&\ts = \| \bar S(\bs x)^{-1} (S(\bs x; \bs \alpha) - \bar S(\bs x)) \|_{\psi_1}\\
      &\ts \leq \| \bar S(\bs x)^{-1} S(\bs x; \bs \alpha) \|_{\psi_1} + 1 = \bar S(\bs x)^{-1} \|(S(\bs x; \bs \alpha)\|_{\psi_1} + 1 \lesssim 1.
    \end{align*}
 \end{proof}

\cref{prop:subexp_Sx_tildeR} informs us that the tail of the distribution of $\tilde R(\bs x; \bs \alpha)$ decreases exponentially fast with a threshold $t\ge 0$, \ie for some universal constants $C,c>0$, 
  $$
  \bb P \big(\big|\tilde R(\bs x; \bs \alpha)\big| \geq t \big) \le C e^{-c t}.
  $$
  
\begin{figure}
  \centering
  \includegraphics{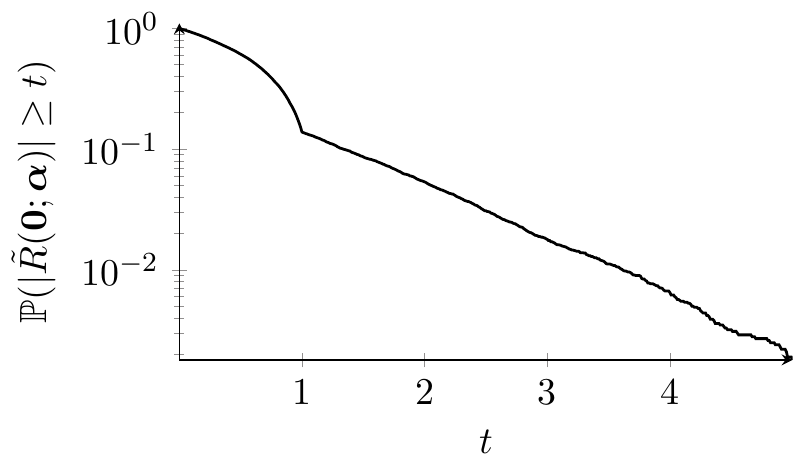}
  \caption{\label{fig:decay_Rtilde}Estimation of $\bb P \big( | \tilde R(\bs 0; \bs \alpha) | \geq t \big)$ as a function of $t\ge 0$ based on the simulation of 10,000 speckle patterns of size $128\times 128$. The fiber core arrangement is a Fermat's golden spiral with $J=120$, $D=113\,\mu$m and $3\sigma_c = 3.2\,\mu$m. Experiment parameters are $\lambda=1\,\mu$m, $z=500\,\mu$m, and $\varpi = 2\,\mu$m. We do observe an exponential decay of the tail of the distribution ($t \ge 1$).}
\end{figure}
\cref{fig:decay_Rtilde} displays an empirical approximation of $\bb P (|\tilde R(\bs 0; \bs \alpha)| \ge t)$ as a function of $t$ based on the generation of 10,000 discretized speckle patterns of size $128\times 128$. For $t \geq 1$ (corresponding to the tail), we do observe an exponential decay of the distribution.

\section{From illumination to fluorescent imaging}
\label{sec:model}
In this section, we first describe the fluorescence phenomenon, \ie the physical process underlying the generation of observations in lensless endoscopy. Second, we describe the model of photons collection by the fiber. Then, based on those previous models, we propose a forward model explaining how the observations are related to the original fluorophore density map. Finally, we formulate the imaging process as the solving of an inverse problem, \ie the minimization of an objective function made of a data fidelity term combined with regularizing priors on the expected density map, and explain how to solve it numerically.

More formally, we consider a sample---either a synthetic phantom or a very thin slice of an \emph{ex vivo} biological sample---containing some molecules of interest that have been tagged with fluorescent dyes. For the sake of simplicity, we assume that this sample is restricted to a 2-D FOV, also called object space $\Omega \subset \mathbb R^2$. A density function $f_0: \Omega \times \mathbb R \rightarrow \mathbb R_+$ associates a fluorophore density $f_0(\bs x,t)$ with each location $\bs x \in \Omega$ at time instant $t$. 
		
	\subsection{Fluorescence model}
	\label{sec:fluo-model}

	Fluorescent molecules possibly absorb light and emit photons under illumination by a light pattern with static intensity $S(\bs x)$. The observed and measured photon flux $\varphi : \Omega \times \mathbb R \rightarrow \mathbb R$ depends on the intensity of the incident light but also on the fluorophore density $f_0$. In this section, we derive the expression of the photon flux $\varphi(\bs x,t)$ at time $t$ emitted by the fluorophores located at $\bs x \in \Omega$ as a function of $f_0$, $S$ and physical parameters. In the next section, we will relate this expression to the measurements obtained with the optical device.
	
	The main phenomena controlling the number of emitted photons are singlet state and triplet state saturations, photo-bleaching, and background fluorescence~\cite{Tsien2006}. We refer the reader to \cite{Song1995,Tsien2006} for detailed explanations about the physics of fluorescence emission.

	In the context of LE, we formulate the following assumptions about the illumination duration and intensity, both tunable acquisition parameters: (\emph{i}) illumination duration $t_{\rm acq}$ is short compared to the time needed to reach triplet excited steady state but long enough for singlet excited state to reach steady state, and (\emph{ii}) intensity $S$ at each $\bs x$ is low and, consequently, far from saturation level. 
	The first assumption allows us to consider only singlet excited state saturation and its corresponding flux. The second assumption amounts to considering that $S(\bs x) \ll k_d/\sigma$, where $k_d$ is the constant rate associated with the return to the ground state and $\sigma$ is a parameter depending on laser wavelength $\lambda$ and on the chosen dye~\cite{Cheng2006,Tsien2006}. 
	In this case, the photon flux emitted at $\bs x$ depends linearly on the product of $S(\bs x)$ and $f_0(\bs x)$,
	\begin{equation}
	\varphi(\bs x) \approx Q_e \sigma\, S(\bs x) f_0(\bs x) .
	\label{eq:flux_linearized}
	\end{equation}
	where quantum yield $Q_e$ characterizes the efficiency of the emission process. This time-invariant model also assumes that there is no molecules displacement during the acquisition---\ie the illumination duration is short---and we neglect photo-bleaching of fluorescent dye.
        
	\subsection{Photon collection model}
	\label{sec:acquisition_model}

	 For a static illumination pattern $S(\bs x)$, the LE collects within the acquisition duration $t_{\rm acq}$ a non-negative number of photons $Y$, a fraction of those emitted by the fluorescent sample. Neglecting for now any other noise sources that could plague this photon collection such as thermal and readout noise in the optical detector, $Y$ is a Poisson \rv whose mean is
	\begin{equation}
	\bb E Y = \int_\Omega \int_0^{t_{\rm acq}} c_0\, \big(\varphi(\bs x)+\varphi_b(\bs x)\big) \ \ud t\,\ud \bs x,
	\label{eq:collection_model}
	\end{equation}
	where $\varphi$ and $\varphi_b$ are the direct and the background photon fluxes generated by the object, and $0<c_0<1$ is a constant accounting for the fraction of photons captured by the LE. As explained in \cref{sec:experimental_setup}, the model \cref{eq:collection_model} assume no sensitivity attenuation of the light collection (as induced by the fiber numerical aperture) closer to the periphery of the FOV.

        	Using \cref{eq:flux_linearized,eq:collection_model}, and assuming $\varphi_b$ is induced by constant density over the FOV, the continuous forward model relating the photon count $Y$ to $f_0$ is equivalent to
	\begin{equation}
	Y \sim \mathcal P\Big(\int_\Omega S(\bs x)[c_1 f_0(\bs x)+c_2]\,\ud \bs x \Big),
	\label{eq:continuous_model_01}
	\end{equation}
	with constants $c_1, c_2 > 0$ depending on $c_0$, $Q_e$, $\sigma$ and $t_{\rm acq}$. In this study, we are interested in the relative contrast between different regions of the object space $\Omega$. Recovering $f_0$ up to a scaling factor and an offset is adequate. Defining $f \coloneqq c_1 f_0 + c_2$, model \cref{eq:continuous_model_01} becomes
	\begin{equation}
	Y \sim \mathcal P\Big(\int_\Omega S(\bs x)f(\bs x)\,\ud \bs x \Big).
	\label{eq:continuous_forward-model}
	\end{equation}

	\subsection{Simplified discrete forward model}
	\label{sec:forward_model_discrete_rep}

	We now simplify \cref{eq:continuous_forward-model} by turning it into a discrete representation. For this, we suppose that both $f$ and $S$ can be represented in a finite set of $N=n^2$ orthonormal functions $\{\psi_i\}_{i=1}^N$ over $\Omega \subset \bb R^2$, \ie there exist coefficients $\{s_i\}_{i=1}^N$ and $\{f_{i}\}_{i=1}^N$ such that $S(\bs x) \approx \sum_{i=1}^N s_i \psi_i(\bs x)$ and $f(\bs x) \approx \sum_{i=1}^N f_{i} \psi_i(\bs x)$. By Parseval, this also means that $\int_{\Omega} S(\bs x) f(\bs x) \ud \bs x \approx \bs s^\top \bs f$, where $\bs s \coloneqq (s_1, \dots, s_N)^\top$ and $\bs f\coloneqq (f_{1}, \dots, f_{N})^\top$. In this work, we select a naive pixel representation with pixel pitch $\varpi>0$, \ie $\psi_i(\bs x) \coloneqq \psi(\bs x - \bs x_i)$ with $\cl X \coloneqq \{\bs x_i\}_{i=1}^N$ evenly sampling $\Omega$ on $N$ locations and $\psi(\bs x)$ equals to $1/\varpi$ if $\bs x \in [- \varpi/2,\varpi/2] \times [-\varpi/2,\varpi/2]$, and 0 otherwise. Assuming a square FOV with side length $L$, we thus set $\varpi = L/n$.

        Combined with the corruption of any possible signal-independent noise sources (\eg electronic and readout noise) appearing in the measurement process, the continuous model \cref{eq:continuous_forward-model} becomes
	\begin{equation}
	Y \sim \mathcal P(\bs s^\top \bs f) +\cl N(0, \sigma^2),
	\label{eq:discrete_model_02}
	\end{equation}
where all extra noise sources are collectively modeled as a Gaussian noise with variance $\sigma^2$.
      
        This work considers an estimation of the biological specimen, assimilated to its fluorophore density map $\bs f$, from the information captured from $M$ distinct observations $\bs y \coloneqq (y_1, \cdots, y_M)^\top \in \bb N^M$, \ie $M$ realizations of $Y$. Each observation $y_j$ is associated with  a discrete illumination patterns $\bs s_j$, the discrete representation of the field $S(\bs x, \bs \alpha_j)$---focused or speckle---set by one complex amplitude configuration $\bs \alpha_j \in \bb C^J$. As justified by our experimental conditions (see \cref{sec:experiments}), this estimation is also realized in a high photon counting regime where the Poisson noise is negligible compared to other noise sources.   

          Gathering the $M$ illumination patterns in the matrix $\bs S \coloneqq (\bs s_1, \cdots, \bs s_M)$ and following the conclusions of \cref{sec:speckles}, we can thus compactly represent the sensing model as 
	\begin{equation}
          \ts \bs y = \bs S^\top \bs f + \bs n, 
          \label{eq:discrete_model_03}
        \end{equation}
      where $\bs n \coloneqq (n_1, \dots, n_M)^\top$ amounts to $M$ realizations of the noise sources $\cl N(0, \sigma^2)$. In practice, while the noise power is generally unknown in \cref{eq:discrete_model_03}, we describe in \cref{sec:choice_parameters} a cross-validation technique that determines the influence of $\bs n$ in the estimation of $\bs f$ from $\bs y$.

	\subsection{Imaging process}
	\label{sec:estimation_f}

        In this section, after a brief review of LE imaging by raster scanning of a focused beam, we present a more general imaging process allowing object observation with speckle illumination. The image reconstruction can then be seen as solving of an inverse problem---accounting for the indirect observation of the fluorophore density map---regularized by some prior information on the structure of this map.

		\subsubsection{Raster scanning imaging}
		\label{sec:raster-scanning}
       
		As mentioned in \cref{sec:le_principle}, RS is a conventional sensing strategy for lensless endoscopy \cite{Sivankutty2016}. It amounts to scanning the object of interest with the focused light beam $S_{\rm foc}(\bs x)$ defined in \cref{eq:focused-beam-def}. As explained in~\cref{sec:illum-transl}, the action of the scanning mirrors allows us to move the center of $S_{\rm foc}$ along a trajectory $\Omega_{\rm tr}$ in the FOV $\Omega$, and to record one observation of the fluorophore density $f$ for each trajectory position. Thus, the number of observations crucially depends on the total length of $\Omega_{\rm tr}$ and its sampling rate, both tunable parameters of the experiment.
		
		Let us assume that we get $M$ observations $\bs y = (y_1, \cdots, y_M)$ for a light beam focused at $M$ discrete locations $\Omega_{\rm d} \coloneqq \{\bs p_i: i \in [M]\} \subset \Omega_{\rm tr}$ that belong to the pixel grid $\cl X$ (see \cref{sec:forward_model_discrete_rep}). In this case, in a noiseless setting and in the discrete model developped in \cref{sec:forward_model_discrete_rep}, RS imaging directly estimates $f$ from the observations, \ie 
	\begin{equation}
	\label{eq:RS-reconstruction}
	\hat f(\bs p_i) = y_i = \bs s_{{\rm foc},i}^\top \, \bs f \approx f(\bs p_i),
      \end{equation}
      where $\bs s_{{\rm foc},i}$ is the discrete representation of $S_{\rm foc}$ when it is translated on $\bs p_i$.
      
	In other words, we consider that the matrix $\bs S$ generated by the patterns $\{\bs s_{{\rm foc},i}\}_{i=1}^M$ in \cref{eq:discrete_model_03} is well approximated by the identity operator over the FOV specified by $\Omega_{\rm d}$. However, even with a fiber arrangement optimized to produce point-like light beams, distant side lobes in the light pattern (see \cref{fig:psf_focused}) limit the accuracy of this approximation.  Moreover, setting unit amplitude $\alpha_j$ in the fiber endface requires to calibrate the SLM, and the FOV must be densely scanned by $\Omega_{\rm tr}$. This last point forces the RS trajectory to visit each point of a fine grid $\cl X$. We explain below how one can alleviate these limitations by adopting a compressive imaging strategy relying on random speckle illumination.

        \subsubsection{Inverse problem formulation}
        \label{sec:inverse_problem}

        In the context of either focused RS imaging or for speckle illumination, we can \emph{invert} the forward model given by \cref{eq:discrete_model_03}. We do so by posing the reconstruction of the fluorophore density map $\bs f$ as the solving of an inverse problem, that is, the estimation of $\bs f$ from the (indirectly) observed data $\bs y$ for a general sensing matrix $\bs S$. A natural but naive way to find this estimate amounts to picking the density $\bs{\hat f}$ minimizing the squared $\ell_2$-norm of the residual $\bs S^\top\bs{\hat f} - \bs y$. This minimization problem is unfortunately often \emph{ill-posed}, mainly due to the noise corruption but also because $\bs S^\top$ is not necessarily well conditioned, \eg if the number of observations $M$ is smaller than the number of pixels $N$. Consequently, we cannot guarantee the uniqueness or even the existence of a solution to this least squares minimization.
	
	We overcome this issue by regularizing the inverse problem, \ie by adding extra objective functions $g_1$ and $g_2$ (or priors) in the minimization \cite{Boyd2004}.
	We solve this multi-criterion problem by minimizing the weighted sum of the objective functions:
	\begin{equation}
		\bs{\hat  f} \in \underset{\bs u}{\text{arg\,min}}~\|\bs S^\top \bs u - \bs y\|^2 + \rho_1 g_1(\bs u) + \rho_2g_2(\bs u),
		\label{eq:dist_regularized_tilde_TGV}	
	\end{equation}
	where $\rho_1,\rho_2 > 0$ are \emph{regularization parameters} balancing between the data fidelity term (and thus the measurement noise level) and regularization terms $g_1$ and $g_2$. We detail in \cref{sec:choice_parameters} how to automatically estimate those parameters. 
        In this work, we focus on two extra priors $g_1$ and $g_2$.
        
	\paragraph{Image structure} We assume that $\bs f$ is made of smooth areas separated by sharp boundaries (possibly corresponding to tissue interfaces or cellular membranes). This prior, encoded in function $g_1$, corresponds either to the minimization of the total variation (TV) norm \cite{Rudin1992} or the second-order total generalized variation ($\text{TGV}_\alpha^2$) norm \cite{Bredies2010,Knoll2011} (see \cref{app:TGV} for formal definitions).  
	Promoting a small TV-norm leads to piecewise constant images and is then well suited for the synthetic data used in the simulations. As for TGV, it can be seen as a generalization of TV to higher-order image derivatives. Minimizing the $\text{TGV}_\alpha^2$-norm leads to piecewise linear estimates where parameter $\alpha > 0$ mades a trade-off between edge-preserving and smoothness-promoting terms (as suggested by \cite{Knoll2011}, we set $\alpha=2$). $\text{TGV}_\alpha^2$ is better suited for fluorescent samples used for the experiments or when we estimate a vignetted map. 
	\paragraph{Non negativity} We know from the physics of fluorescence emission that $\bs f \succeq \bs 0$. Function $g_2$ is then a convex indicator function on the set $\R{N}_+$ that is equal to zero if the constraint is satisfied and to $+\infty$ otherwise, $g_2 = \imath_{\R{N}_+}$.
	
        \subsubsection{Minimization algorithm}
        \label{sec:algo}
        Since we do not require $g_1$ and $g_2$ to be differentiable, we cannot minimize \cref{eq:dist_regularized_tilde_TGV} with methods relying on the computation of the gradient or the Hessian of the objective function. Instead, we resort to the family of proximal algorithms \cite{Parikh2013} that can deal with optimization of non differentiable functions. In a nutshell, these algorithms can minimize the sum of several convex non-smooth functions by splitting this optimization into an iterative algorithm relying on the computation of the \emph{proximal operator} of each of these functions. This operator is defined as follows.
		\begin{definition}(from \cite{Parikh2013})
		Let $\psi$ be a lower semicontinuous convex function from $\mathcal S \subset \bb R^d$ to $\left]-\infty,+\infty\right[$ such that the domain of $\psi$ is non-empty. The proximal operator of $\psi : \mathcal S \rightarrow \mathcal S$ evaluated in $\bs z \in \mathcal S$ is unique and defined as 
		\begin{equation*}  
		\ts \text{\normalfont prox}_{\psi}(\bs z) \coloneqq \underset{\bs x \in \mathcal S}{\text{\normalfont arg\,min}}~\tfrac{1}{2}\|\bs z - \bs x\|^2_2 + \psi(\bs x).
		\label{prox_op}
		\end{equation*}
		\end{definition}
		The evaluation of the proximal operator of a convex function $\psi$ on $\bs z$ thus provides a minimizer of $\psi$ that remains close to $\bs z$. For many smooth and non-smooth convex functions, this operator is closed form or fast to compute; this is the case of the $\ell_1$-norm, the T(G)V-norm, the indicator function of a convex set, and the functions $g_1$ and $g_2$ defined above.
		
		In this work, we solve \cref{eq:dist_regularized_tilde_TGV} with a generalized version of the Chambolle-Pock (CP) primal-dual algorithm \cite{Chambolle2010,Gonzalez2014}. This flexible algorithm allows for composing convex functions with linear operators in the minimization, \ie it can be used to solve the following type of problem~\cite{Gonzalez2014},
		\begin{equation}
                  \ts \underset{\bs u}{\text{min}}~h(\bs u) + \sum_{k=1}^K \varphi_k(\bs A_k\bs u).
                  \label{eq:CP-template}
              \end{equation} 
              Solving \cref{eq:dist_regularized_tilde_TGV} with TV regularization thus corresponds to solving \cref{eq:CP-template} with $K=2$, $h(\bs u)=\imath_{\R{N}_+} (\bs u)$, $\varphi_1(\bs u) = \|\bs u - \bs y\|^2$, $\varphi_2(\bs u) = \|\bs u \|_1$, and, in the case where $g_1$ is the TV-norm, the matrix $\bs A_1 = \bs S^\top$ and $\bs A_2 = \nabla$ is the finite difference operator (defined in \cref{app:TGV}).  If $g_1$ is the $\TGV$-norm, \cref{app:TGV} provides the definition of the functions and the corresponding operators. The CP algorithm only requires computing the proximal operators of $h$, the Legendre-Fenchel conjugate of each $\varphi_k$, as well as the adjoints of the matrices $\bs A_k$'s. Notice that, as an alternative, the Generalized Forward Backward algorithm can also solve \cref{eq:dist_regularized_tilde_TGV} by leveraging the fact that the function $\|\bs S^\top\bs u - \bs y\|^2$ is differentiable with respect to $\bs u$~\cite{Raguet2013}.	 			 
		
              \subsubsection{Regularization parameter and stopping criterion}
              \label{sec:choice_parameters}
              As brought up earlier, the optimal value of the parameter $\rho$ balancing the data fidelity term and the regularization term $g_1$ in \cref{eq:dist_regularized_tilde_TGV} is unknown \emph{a priori}. From a maximum a posteriori standpoint, this parameter actually depends on the chosen regularization term $g_1$ and on noise corrupting the observations $\bs y$ \cref{eq:discrete_model_03}. We can, however, set a value for $\rho$ based on the optimization of a criterion depending on the estimate and the observations.  Boufounos \emph{et al.}, followed by Ward, proposed for instance to use cross-validation (CV) to estimate such a balancing parameter \cite{Boufounos2007,Ward2009}. To avoid overfitting noisy observations in the context of compressive sensing, the set of observations can be divided into two parts: a first set is used for the signal estimation, and a second, much smaller, is used to determine when to stop an iterative reconstruction algorithm, or to select appropriate regularization parameters. 
		
              We propose here to use the same CV approach. We first randomly split the vector of $M$ observations in $\bs y$ in an estimation and validation vectors $\bs y_{\rm est}$ and $\bs y_{\rm val}$ of size $M_{\rm est}$ and $M_{\rm val} = M - M_{\rm est}$, respectively, with $M_{\rm val}$ representing a small fraction of $M$ (see \cref{sec:CS_strategies}). We use for this two restriction matrices, $\bs R_{\rm est} \in \{0,1\}^{M_{\rm est}\times M}$ and $\bs R_{\rm val} \in \{0,1\}^{M_{\rm val}\times M}$, such that $\bs y_{\rm val} = \bs R_{\rm val}\,\bs y$ and similarly for $\bs y_{\rm est}$. In this context, the estimation problem \cref{eq:dist_regularized_tilde_TGV} becomes
              \begin{equation}
	\bs{\hat f}_\rho \in \underset{\bs u}{\text{arg\,min}}~\|\bs R_{\rm est}(\bs S^\top\bs u - \bs y)\|^2 + \rho g_1(\bs u) + \imath_{\R{N}_+} (\bs u),
	\label{eq:dist_regularized_tilde_TGV_CV}
\end{equation}
		As in \cite{Boufounos2007}, we use the squared $\ell_2$-norm of the residual to evaluate the quality of the produced estimate $\bs{\hat f}_\rho$. This is defined by
		\begin{equation*}
		q(\bs{\hat  f}_\rho) \coloneqq \|\bs R_{\rm val}(\bs S^\top \bs{\hat  f}_\rho - \bs y)\|^2.
		\label{eq:criterion_residual}
              \end{equation*}

             This quality function $q$ is then used, first, to stop the CP algorithm when $q$ reaches a minimum, and second, to select an appropriate value for $\rho$. This last operation is done by solving \cref{eq:dist_regularized_tilde_TGV_CV} for $K$ values of $\rho$ and $\rho_{k^\ast}$ is considered as the best choice if $q(\bs{\hat  f}_{\rho_{k^\ast}}) \le q(\bs{\hat f}_{\rho_k})$, for all $k \in [K]$.

\section{Compressive imaging with speckle illumination}
\label{sec:CS_strategies}
We show in this section how the LE acquisition model~\cref{eq:discrete_model_03} can follow a compressive sensing (CS) scheme~\cite{Candes2008}, \ie when the density map $\bs f \in \bb R^N$ is observed with $M < N$ random speckle illuminations. This also aim at removing the beamforming calibration of the RS method by using speckle illumination. We thus consider a sensing model~\cref{eq:discrete_model_03} where the patterns $\{\bs s_k\}_{k=1}^M$ composing $\bs S \in \bb R_+^{N \times M}$ are associated with $M$ random configurations $\{\bs \alpha_k\}_{k=1}^M \subset \bb C^J$ of the MCF complex field amplitudes.

In CS theory, the success of the estimation of a signal of interest mainly depends on the ability of the sensing matrix to capture information about this signal during the acquisition process~\cite{Rauhut2012}. This theory offers tools to study the recovery of low-complexity signals (such as sparse images in a given basis) when the number of observations is much smaller than the number of pixels in the final image, \eg the restricted isometry property (RIP) and the null space property \cite{Candes2008,Rauhut2012}. But as mentioned in \cite{Duarte2011,Rauhut2012,Tanaka2018}, those two properties are difficult to verify directly (\eg for deterministic matrices) without extra assumptions on the sensing matrix. Most of the existing theoretical work in CS theory is about random matrices with zero-mean \iid row entries, or random partial Fourier sampling. For instance, the behavior of a reconstruction based on observations acquired with sub-Gaussian or Bernoulli measurement matrices has been extensively studied from a theoretical point of view \cite{Candes2008,Jacques2010a}. However, those zero-mean random matrices with \iid entries are too ideal for real-life applications where the physics often limits the choice of sensing operator \cite{Duarte2011,Rauhut2012}.

As detailed in the next subsections, we study two practical compressive acquisitions for LE, and we show how to make them, at least partially, compatible with CS theory. The first strategy, called \emph{speckle imaging} (\textbf{SI}), observes the object with $M$ distinct speckle patterns generated by $M$ \iid complex amplitudes configurations $\{\bs \alpha_k\}_{k=1}^M$. The second, coined \emph{partial speckle scanning} (\textbf{PSS}), adopts a structured compressive sensing strategy; while observing the object, only a fraction of the $M$ speckles are generated randomly, and the others consist of shifted copies of the firsts.

\subsection{Speckle imaging}
\label{sec:UCS}

In SI, the phases of the complex amplitudes of the $J$ MCF cores are all independently picked uniformly at random over $[0, 2\pi]$ for each measurement, \ie given $\bs \alpha_k = (\alpha_{k1}, \, \cdots,\alpha_{kJ})^\top$, we have $\alpha_{kj} \sim_\iid \exp(\im\,\cl U([0,2\pi]))$, for $j \in [J]$ and $k \in [M]$. This unstructured compressive sensing schemes thus acquires $M$ observations by successively illuminating the biological sample with $M$ different unstructured light patterns $\bs s_j$, each obtained from a random configuration of the SLM. In this context, we first note that the sensing model~\cref{eq:discrete_model_03} can be rewritten~as 
\begin{equation}
  \ts \bs y = \bs S^\top \bs f + \bs n = \sqrt{M}\, \bs \Phi\, \bs{\bar S} \bs f + \bs n,
  \label{eq:discrete_model_CS}
\end{equation}
where the sensing matrix $\bs \Phi \in \bb R^{M \times N}$ is a renormalization of $\bs S^\top$ defined as
\begin{equation}
  \ts \bs \Phi \coloneqq \tfrac{1}{\sqrt{M}} \bs S^\top \bs{\bar{S}}^{-1},\quad \bs{\bar S} \coloneqq \diag(\bs{\bar s}) \in \mathbb R^{N\times N},
  \label{eq:Phi-def}
\end{equation}
and $\bs {\bar s} \coloneqq (\bar s_1, \dots, \bar s_N)^\top \in \bb R^N_+$ is the discrete representation (over the pixel grid $\cl X$) of the mean speckle field $\bar S$ defined in~\cref{eq:Sbar-def}. The rationale of the renormalized model~\cref{eq:discrete_model_CS} follows the conclusion of~\cref{sec:speckles}. Since by construction each row $\bs{\bar S}^{-1} \bs s_k$ of the random matrix $\sqrt M \bs \Phi$ in~\cref{eq:Phi-def} corresponds to sampling $S(\bs x; \bs \alpha_k)/\bar S(\bs x)$ on the pixel grid $\cl X$ (provided that $\bar S(\bs x)$ evolves slowly compared to $S$, which basically means that $1/\sigma_c \gg 1/D$),~\cref{prop:subexp_Sx_tildeR} shows that each $\Phi_{jk}$ are sub-exponential for a random speckle illumination generated with random complex amplitudes.

However, the entries of $\bs \Phi$ are biased: their mean is $1/\sqrt{M}$ since $\bar S = \bb E S$. Such a bias is detrimental to the application of CS theory; for instance, it prevents $\bs \Phi$ to satisfy the RIP property~\cite{Rauhut2012}. Following~\cite{raginsky2010compressed,Sudhakar2018}, we can consider the following \emph{debiased observation model} that amounts to modify both the observations and the sensing model:
\begin{equation}
	\bs z \coloneqq \bs y -  (\bs{\bar s}^\top \bs f) \bs 1_M = \sqrt{M}\bs{\tilde  \Phi} \bs{\bar S}\bs f + \bs n,
	\label{eq:noiseless_zeromean_model}
\end{equation}
where $\bs{\tilde \Phi}$ is the debiased sensing matrix
\begin{equation}
\ts \bs{\tilde  \Phi} \coloneqq \bs \Phi - \frac{\bs 1_M \bs 1_N^\top}{\sqrt{M}} = \tfrac{1}{\sqrt{M}} \bs S^\top \bs{\bar{S}}^{-1} - \frac{\bs 1_M \bs 1_N^\top}{\sqrt{M}} \bs{\bar{S}} \bs{\bar{S}}^{-1} = \tfrac{1}{\sqrt{M}} (\bs S^\top  - \bs 1_M \bs{\bar s}^\top) \bs{\bar{S}}^{-1}.
\label{eq:phi_phitilde}
\end{equation}
In this new model, we thus collect $M$ observations $\bs z$ of the object $\bs f$ thanks to the sensing matrix $\bs{\tilde  \Phi} = (\bs{\tilde r}_1, \dots, \bs{\tilde r}_M)^\top/\sqrt{M} \in \bb R^{M \times N}$. By construction, its random entries are zero-mean. Moreover, comparing~\cref{eq:phi_phitilde} to~\cref{eq:final_Sx}, we see that each row $\bs{\tilde r}_k = \bs{\bar S}^{-1}(\bs s_k - \bs{\bar s})$ of $\sqrt M \bs{\tilde  \Phi} = (\bs{\tilde r}_1, \dots, \bs{\tilde r}_M)^\top$ corresponds---for a slowly varying mean field $\bar S$---to the discrete representation of $\tilde R(\bs x; \bs \alpha_k)$ as defined in~\cref{eq:final_Sx}. Therefore, the sub-exponential norm of the entries of $\bs{\tilde \Phi}$ are bounded by $1/\sqrt{M}$.

However, the debiased model~\cref{eq:noiseless_zeromean_model} is impractical; the quantity $(\bs{\bar s}^\top \bs f) \bs 1_M$ is unknown. Up to a slight increase of the noise level, we can solve this problem by assuming that the noise $\bs n$ corrupting the observations is zero-mean. In this case, since $\bb E \frac{1}{M} \bs 1_M^\top \bs y = \bb E \frac{1}{M} \sum_{i=1}^M y_i = \frac{1}{M} \sum_{i=1}^M \bb E(\bs s_i^\top \bs f + n_i) \approx \bs{\bar s}^\top \bs f$, where the expectation refers to the randomness of both $\bs n$ and all the $\{\bs s_i\}_{i=1}^M$, we can reach this realistic sensing model
\begin{equation}
  \ts \bs{\tilde  y} \coloneqq \bs y - \frac{1}{M} (\bs 1_M^\top \bs y) \bs 1_M = \sqrt{M}\bs{\tilde  \Phi} \bs{\bar S}\bs f + \bs{\tilde n}, 
  \label{eq:noiseless_zeromean_model_practice}
\end{equation}
with $\bs{\tilde n} \coloneqq \bs n + \bs n'$ and $\bs n'  \coloneqq (\bs{\bar s}^\top \bs f) \bs 1_M - \frac{1}{M} (\bs 1_M^\top \bs y) \bs 1_M$. Since $\bb E \frac{1}{M} (\bs 1_M^\top \bs y) = \bs{\bar s}^\top \bs f$ and each $y_i$ are independent \rv{s}, we can compute that the variance of the entries of $\bs n'$ decay like $\cl O(1/M)$ when $M$ increases. Therefore, the variance $\tilde \sigma^2$ of each components of $\bs{\tilde n}$ behaves like $\tilde \sigma^2 = \sigma^2 + \cl O(1/M)$, which is close to $\sigma^2$, the variance of each $n_k$, if $M$ is large.   

In the model \cref{eq:noiseless_zeromean_model_practice}, we have explained above that the entries of $\bs{\tilde  \Phi}$ are zero-mean and sub-exponential. For matrices $\bs A$ with zero-mean, random \iid entries with bounded sub-exponential norm, Adamczak \emph{et al.} showed that, provided  $M = \mathcal O(s \ln^2{(N/s)})$, such matrices satisfy the RIP property with high probability when normalized by $\sqrt{M}$ \cite{Adamczak2011}. This means that there exists a constant $0<\delta<1$ such that, for all $s$-sparse vectors $\bs u$ in $\bb R^N$---with at most $s$ non-zero components---we have 
\begin{equation*}
  \ts (1-\delta) \|\bs u\|^2 \leq  \|\frac{1}{\sqrt{M}} \bs A \bs u\|^2 \leq (1+\delta) \|\bs u\|^2_2. 
\end{equation*}
Respecting the RIP implies that robust reconstruction of any $s$-sparse vector $\bs x$ from its (possibly noisy) observations $\bs A \bs x$ can be achieved via, \eg $\ell_1$-minimization or greedy methods~\cite{Rauhut2012}. Foucart \emph{et al.} showed that such matrices also satisfy a modified version of the RIP based on $\ell_1$-norm \cite{Foucart2017} and propose a recovery algorithm able to recover $s$-sparse vectors with $M$ in $\mathcal O(s \ln{(N/s)})$.

Unfortunately, the analysis of the autocorrelation of $\tilde R$ made in~\cref{sec:speckl-illum} shows that all entries of $\bs{\tilde \Phi}$ are (locally) correlated. This is no surprise; while each row $\bs{\tilde r}_k$ of $\sqrt M \bs{\tilde  \Phi}$ belongs to $\bb R^N$, only $J$ parameters---the $J$ random phases of the complex amplitudes $\bs \alpha_k$---were used to generate the pattern $\bs{\tilde r}_k$. Thus, provided that $N \geq J$ (\eg if $J=\cl O(100)$ and $N=\cl O(128^2)$ as in \cref{sec:UCS-Simulations}), spatial correlations are inherent in each $\bs{\tilde r}_k$. Therefore, we cannot readily use the results of \cite{Adamczak2011,Foucart2017} to characterize our sensing scheme. Note that the existence of the above-mentioned correlations is not an impossibility per se; after all, random partial Fourier sensing matrices, made of $M$ randomly sampled rows of a Fourier matrix, do present such correlations but respect the RIP under certain conditions~\cite{Candes2008}. Studying if our sensing matrix $\bs{\tilde \Phi}$ satisfies the RIP is thus an open question.  

There also exists a limit to the resolution achievable by our sensing model. Under the far-field approximation, \cref{eq:resid-field-F-app} shows that 
$$
\ts \tilde R(\bs x; \bs \alpha) \approx \tilde R_{\rm ff}(\bs x; \bs \alpha) \coloneqq \tfrac{1}{J} \sum^J_{j,k=1} \alpha_j\alpha_k^* e^{\frac{2\pi\im}{\lambda z}  (\bs q_j - \bs q_k)^\top \bs x} - 1 = \cl F\big[ \Theta(\bs \cdot; \bs \alpha) \big](- \frac{2\pi}{\lambda z} \bs x),
$$
with ${\Theta(\bs u; \bs \alpha) \coloneqq \tfrac{1}{J} \sum^J_{j,k=1} \alpha_j\alpha_k^* \delta(\bs u - (\bs q_j - \bs q_k) ) - \delta(\bs u)}$. With $D = \max_{j,k} \|\bs q_j - \bs q_k\|$ being the smallest length such that $\Theta(\bs u; \bs \alpha) = \Theta(\bs u; \bs \alpha) \cdot \disk(\frac{\bs u}{D})$, we find
$$
\ts \tilde R_{\rm ff}(\bs x; \bs \alpha) =  (\tilde R_{\rm ff}(\bs \cdot; \bs \alpha) \ast H) (\bs x), \quad\text{with}\  H(\bs x) = H(-\bs x) = (\frac{2\pi D}{\lambda z})^2 \cl F[\disk](\frac{2\pi D}{\lambda z} \bs x).
$$
Therefore, for any row $\bs{\tilde r}_k$ of $\sqrt M \bs{\tilde \Phi}$, which discretizes $\tilde R(\bs x; \bs \alpha_k) \approx \tilde R_{\rm ff}(\bs x; \bs \alpha_k)$, there exists a symetric filter $\bs h$ (the discretization of $H$) whose size scales like $\cl O(\lambda z /D)$ and such that $\bs{\tilde r}_k = \bs{\tilde r}_k * \bs h$. In other words, as also stressed in~\cref{sec:speckl-illum}, the patterns $\bs{\tilde r}_k$ cannot have faster variations than those allowed by the spectrum of $\bs h$; the components of this pattern are correlated and display ``speckle grains'' whose size are related to the size of $\bs h$.

The existence of $\bs h$ alters the model \cref{eq:noiseless_zeromean_model_practice}; since $\scp{\bs a * \bs h}{\bs b} = \scp{\bs a}{\bs b * \bs h}$ for any vector $\bs a$ and $\bs b$, we find 
\begin{equation}
  \ts \bs{\tilde  y} = \sqrt{M}\bs{\tilde  \Phi} (\bs h \ast \bs{\bar S}\bs f) + \bs{\tilde n}. 
  \label{eq:noiseless_zeromean_model_practice_blurred}
\end{equation}
In conclusion, even if $\bs{\tilde  \Phi}$ in this last model was a random matrix with \iid sub-exponential entries, \cref{eq:noiseless_zeromean_model_practice_blurred} shows that, at best, we can estimate $\bs h \ast \bs{\bar S}\bs f$---a version of $\bs{\bar S}\bs f$ that is blurred by $\bs h$. Since the size of this filter is controlled by $\lambda z /D$, the impact of $\bs h$ can be mitigated by either decreasing distance $z$ between the distal end of the fiber and the object (while still keeping the far-field regime valid) or increasing $D$ when designing the MCF. Model \cref{eq:noiseless_zeromean_model_practice_blurred} also shows that the pixel pitch $\varpi$ used to discretize $f$ can be adjusted to the speckle grain $\lambda z /D$ since a higher resolution cannot be achieved. We will follow this procedure in \cref{sec:method} for actual LE fluorescence imaging.

In the following sections, we use the physical model~\cref{eq:discrete_model_03} for data simulations. For the reconstruction process, we consider debiased observations $\bs{\tilde y}$ in \cref{eq:noiseless_zeromean_model_practice}, the associated sensing matrix $\bs{\tilde \Phi}$, and we adapt the CV optimization method \cref{eq:dist_regularized_tilde_TGV_CV} of \cref{sec:choice_parameters} into
\begin{equation}
	\bs{\hat f}_\rho \in \argmin_{\bs u}~\|\bs R_\text{est}(\sqrt{M}\bs{\tilde \Phi}\bs{\bar S}\bs u - \bs{\tilde y})\|^2 + \rho g_1(\bs u) + \imath_{\R{N}_+} (\bs u),
	\label{eq:dist_regularized_tilde_TGV_CV_debiased}
\end{equation}
with quality criterion $\tilde q(\bs{\hat f}_\rho) \coloneqq \|\bs R_\text{val}(\sqrt{M}\bs \Phi \bs{\bar S} \bs{\hat f}_\rho - \bs{\tilde y})\|^2$. As will be clear below, the inclusion of $\bs{\bar S}$ in this minimization---induced by inserting \cref{eq:noiseless_zeromean_model_practice} in \cref{eq:dist_regularized_tilde_TGV_CV}---allows us to expand the area where $\bs f$ is estimated outside of the support of $\bs{\bar S}$.  

As explained in~\cref{sec:UCS-Simulations}, this optimization scheme leads to a quality of the final estimate similar to the one obtained with the biased sensing model but achieved in a much shorter reconstruction time.

\subsubsection{Simulations}
\label{sec:UCS-Simulations}
We simulate observations $\bs y$ of $128\times 128$ USAF transmission target $\bs f$ (see~\cref{fig:USAF_GT}) according to~\cref{eq:discrete_model_03} for $M/N\in\{0.1,0.2,\dots,1\}$. We reconstruct estimates of $\bs f$ by first solving the original inverse problem~\cref{eq:dist_regularized_tilde_TGV_CV} with $(\bs \Phi,\bs y)$ and then the debiased model~\cref{eq:dist_regularized_tilde_TGV_CV_debiased} with $(\tilde{\bs\Phi},\bs{\tilde y})$. For comparison, we also simulate RS observations and measurements acquired with an ideal zero-mean Gaussian matrix $\bs\Phi_{\rm G}$. Since the synthetic USAF transmission target is piecewise constant, we select the TV-norm for $g_1$ in~\cref{eq:dist_regularized_tilde_TGV_CV,eq:dist_regularized_tilde_TGV_CV_debiased}.

\captionsetup[subfigure]{labelformat = parens}
\begin{figure}
	\centering
	\subfloat[\label{fig:USAF_GT}]{\includegraphics{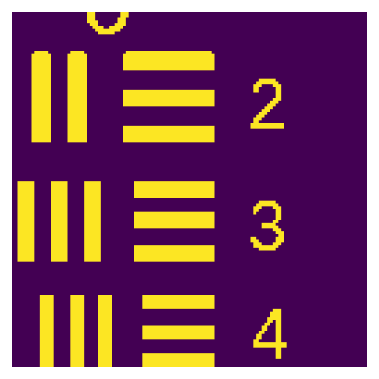}}
	\subfloat[\label{fig:USAF_GT_vignetted}]{
          \includegraphics{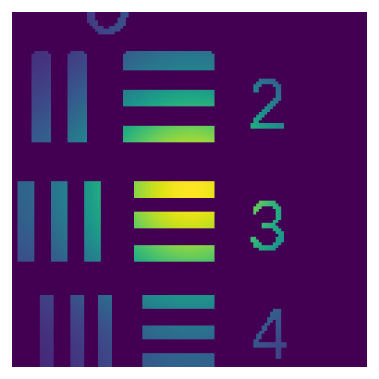}	}
	\subfloat[\label{fig:USAF_RS}]{
          \includegraphics{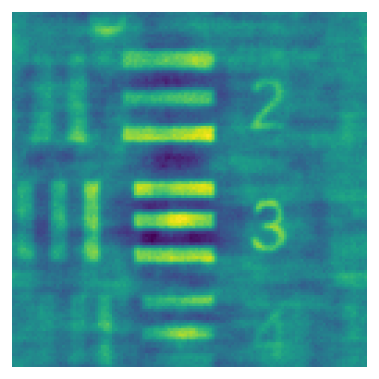}	}
	\caption{(a) $128\times 128$ ground truth $\bs f$ of standard USAF transmission target. (b) Vignetted ground truth $\bar{\bs S} \bs f$. (c) Acquisition with RS strategy as explained in \cref{sec:estimation_f}. Observations were generated according to \cref{eq:discrete_model_03} with focused illumination pattern and AWGN (SNR of $\bs\Phi \bs f$ equal to $40$ dB). The maximum intensity of the RS image is around three times lower than the one of the ground truth.}
	\label{fig:USAF_GT_RS}
\end{figure}

We measure the quality of an estimate $\bs{\hat u}$ with both the signal-to-noise ratio (SNR) and the weighted SNR metrics: 
\begin{equation*}
	\ts \SNR(\bs{\hat u},\bs u) \coloneqq 20\log_{10}{\big(\frac{\|\bs{\hat u}\|}{\|\bs{\hat u}-\bs u\|}\big)},\quad \WSNR(\bs{\hat u},\bs u) \coloneqq \SNR(\bs{\bar S}\bs{\hat u},\bs{\bar S}\bs u).
\end{equation*}
The WSNR attenuates the reconstruction artefacts at the limit of the FOV where the vignetting of $\bs{\bar S}$ is the strongest.

For each value of $M$ above, $M$ observations of $\bs f$ are generated according to model~\cref{eq:discrete_model_03}, with a noise variance $\sigma^2$ set such that the SNR of $\bs\Phi\bs f$ is equal to 40\,dB. For all the experiments, the illumination patterns were generated for a Fermat's spiral core arrangement ($J=120$ cores and diameter $D=113\,\mu$m) and the following parameters: $\lambda=1\,\mu$m, $z=500\,\mu$m, $3 \sigma_c = 3.2\,\mu$m and pixel pitch $\varpi = 2\,\mu$m.

We perform the reconstruction with the CP algorithm (see~\cref{sec:algo}) and a maximum number of internal iterations equal to $5\,000$, never reached in practice.  The stopping criterion and regularization parameter are set as described in~\cref{sec:choice_parameters} with $M_\text{val} = 256$, and initial values $\rho^{(0)} = 10$ and $\bs{\hat f}_{\rho^{(0)}} = \bs 0_N$. The initial guess of the algorithm for each $k> 0$ is given by previous estimate $\bs{\hat f}_{\rho^{(k-1)}}$ and $\rho^{(k)} = 0.5\rho^{(k-1)}$. The maximum number of iterations on the estimation of $\rho$ is $20$.

\captionsetup[subfigure]{labelformat=empty}

\begin{figure}
	\centering
	\subfloat{
          \includegraphics{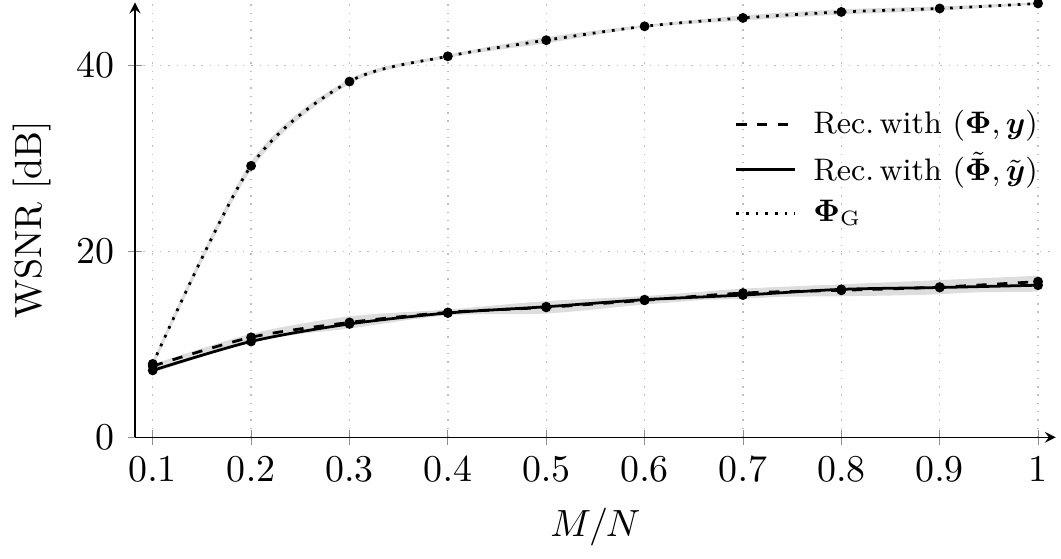}	}
	\\
	\subfloat{\raisebox{0.1\linewidth}{\rotatebox[origin=c]{90}{\small$\bs{\hat f}$}}
	}
	\subfloat[Rec.\,with $({\bs \Phi},{\bs y})$]{
\includegraphics{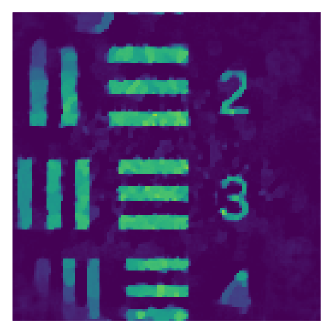}	}
	\subfloat[Rec.\,with $(\tilde{\bs \Phi},\tilde{\bs y})$]{
\includegraphics{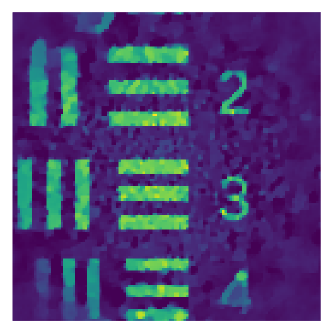}	}
	\subfloat[$\bs \Phi_{\rm G}$]{
\includegraphics{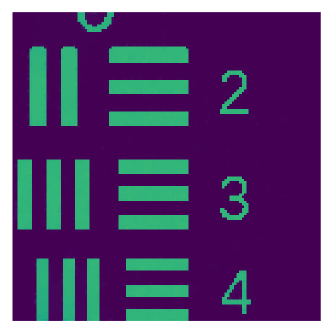}	}
	\caption{\label{fig:noise40_UCS_comparison}\textbf{SI strategies.} Above: mean WSNR (over 10 trials) of the restored USAF target \emph{versus} $M/N$ for three SI strategies: (\emph{i}) acquisition and reconstruction with $({\bs \Phi},{\bs y})$ (dashed line), (\emph{ii}) acquisition with ${\bs \Phi}$ and reconstruction with $(\tilde{\bs \Phi},\tilde{\bs y})$ (solid line), and (\emph{iii}) acquisition and reconstruction with ideal (unrealistic) sensing matrix $\bs \Phi_{\rm G}$ (dotted line). The gray areas represent the standard deviations. Synthetic observations were generated according to \cref{eq:discrete_model_03} (SNR of $\bs \Phi \bs f$ equal to $40$\,dB). Below: estimates of $\bs f$ obtained for $M/N=0.3$. All estimates share the same intensity scale.}
      \end{figure}

\subsubsection{Comparison results}
Simulation results are visible on~\cref{fig:noise40_UCS_comparison}. Estimates of the fluorophore density map obtained with SI methods have a better quality compared to the estimate obtained with RS (WSNR equal to 1.68\,dB with optimal normalization, see~\cref{fig:USAF_RS}). However, unlike RS, SI requires to store $M$ speckles patterns needed for the reconstruction.

Acquisition and reconstruction with a Gaussian sensing matrix outperforms the other two frameworks involving speckle illumination. While unrealistic, such a Gaussian framework is useful as it sets an upper bound on the achievable reconstruction quality of LE imaging. For instance, it shows that the CS regime, where enough measurements are collected to ensure a reconstruction quality close to the measurement SNR of 40\,dB, starts around $M/N = 0.3$ after a sharp transition in the curve. After that value, an increase in the number of measurements does not lead to a significant gain in WSNR. 
The same behavior is observed for reconstructions based on speckle acquisition even if the change in the curve rate is less outstanding. 
WSNR of images reconstructed with $(\tilde{\bs\Phi},\bs{\tilde y})$ is similar to the one obtained for a reconstruction performed with $({\bs\Phi},{\bs y})$, despite the extra noise introduced in the debiased model. However, we observed significantly shorter reconstruction times for the debiased model when $M/N\le 0.7$. We also noticed that, while the reconstruction with $({\bs\Phi},{\bs y})$ fails when we do not constraint the estimate to be non negative, the debiased model performs almost as well without enforcing the positivity of the map. In the rest fo the paper, all reconstructions are performed with the debiased model.

\subsection{Partial Speckle Scanning}
\label{sec:SCS}

One disadvantage of SI is that, despite the good quality of its reconstruction, it requires a long acquisition time compared to RS imaging that simply uses scanning mirrors. This is due to the strong contrast that exists between the time it takes to change the SLM configuration in SI---about $1/60\,$s in our setup \cite[see DVI frame rate]{HamamatsuSLM}---and the elapsed time between two consecutive tilts of the galvanometric mirrors, which is about 1,000 times faster~\cite{CambridgeTechnologyInc.2019}.
  
We here propose a hybrid acquisition framework, called \emph{partial speckle scanning} (PSS), combining the advantages of both techniques while keeping a high reconstruction quality. This method strongly relies on two properties of the LE considered in this paper: (\emph{i}) the ability to easily generate speckles by randomly programming the SLM and (\emph{ii}) the MCF memory effect.

The PSS strategy acquires $M$ observations but unlike the SI strategy, a single SLM configuration allows us to collect $M_{P}\le M$ observations by translating the speckle. Following~\cref{sec:illum-transl}, this is achieved by applying different tips to the input wave front to the MCF with the scan mirrors. Mathematically, given a set of $P=M/M_P$ complex amplitudes $\{\bs \alpha_j\}_{j=1}^{P}$ randomly generated as in the SI model, and $M_{P}$ mirror tilts $\{\bs \theta_k\}_{k=1}^{M_P}$, the PSS sensing matrix $\bs{\tilde \Phi}$ used in the model (\ref{eq:noiseless_zeromean_model_practice}) corresponds to
  \begin{equation}
    \label{eq:S-CS-model}
    \ts  \bs{\tilde \Phi} = [\bs{\tilde \Phi}^\top_1,\, \cdots, \bs{\tilde \Phi}^\top_{P}]^\top,\quad\text{with}\ \bs{\tilde \Phi}^\top_j = [\bs{\tilde r}^{(0)}_j, \bs{\tilde r}^{(1)}_j,\, \cdots, \bs{\tilde r}^{(M_{P}-1)}_j]^\top.    
  \end{equation}
  and where, according to \cref{eq:tildeR_translation}, $\bs{\tilde r}^{(k)}_j$ is the discrete representation of $\tilde R\big(\bs x, \diag(\bs \gamma(\bs \theta_k)) \bs \alpha_i \big) \approx \tilde R(\bs x + \bs \theta_k, \bs \alpha_i)$---the shifted residual field---for each $j \in [P]$, $k \in [M_P]$, and $\bs \gamma$ defined in \cref{eq:gamma-mod-def}.

  In this work, we consider a single line scanning mode, \ie shifts are applied to the speckle patterns in only one (abitrary) unit direction $\bs u \in \bb R^2$ and $\bs \theta_k = k \delta\, \bs u$ for some translation step $\delta >0$ of the speckle patterns in the plane $\cl Z$. Line scanning is fast and accurate because it only needs the rotation of one galvanometric mirror.

When designing the PSS framework, the adjustment of the shift $\delta$ between two illumination patterns (which we perform in the next section) faces two competing effects. First, if $\delta$ is smaller than a speckle grain, \ie if too small compared to $\lambda z/D$, the shifting model \cref{eq:S-CS-model} introduces too much correlations between neighbouring rows of $\bs{\tilde \Phi}$ and there is not much variety in the $M_P$ observations acquired with a fixed SLM configuration. Moreover, in addition to the column dependency mentioned in~\cref{sec:UCS}, our sensing matrix further deviates from an ideal decorrelated sub-exponential random matrix. Second, to approximate $\tilde{\bs \Phi}$ by a block-circulant matrix (\eg to boost the computation of matrix-vector multiplication), $\delta$ must be small enough such that approximation~\cref{eq:Sx_tilde_translation} still holds. This imposes us to respect the paraxial approximation for all translation steps, \ie we must have $M_P \delta \ll z$. Note that, while there exist sensing constructions based on subsampled random circulant matrices defined from sub-Gaussian random filters \cite{Duarte2011,Rauhut2012}, there are no known constructions for sub-exponential random filters. Despite this absence, our simulations below confirm that the PSS sensing compares favorably to the SI and the Gaussian sensing schemes.

An advantage of the PSS scheme is that its scanning time $t_{\rm acq}$ is reduced compared to the SI acquisition time; a desirable advantage when observing fluorescent biological samples subject to photo-bleaching (see~\cref{sec:fluo-model}). We can compute this time from
\begin{equation}
t_{\rm acq} = t_{\rm galva} (\xi P  + M - P ),\quad\text{with}\ M = P M_P, 
\label{eq:tacq}
\end{equation}
where $1/t_{\rm galva}$ is the scan mirror rate and $\xi \coloneqq t_{\rm SLM}/t_\text{galva}$ with $1/t_\text{SLM}$ the SLM frame rate.
If $t_{\rm SLM}$ is large compared to $t_{\rm galva}$, \ie if $\xi$ is large, for a fixed $M$, $t_{\rm acq}$ can be kept small if we keep the number of distinct speckle patterns $P$ small. In our setup, $\xi \approx 10^3$.

\subsubsection{Simulations}

As a first simulation, we aim to choose a convenient shift $\delta$ between two consecutive speckles. We simulate $M$ observations $\bs y$ of $128 \times 128$ USAF transmission target $\bs f$ for $M/N=0.3$. $M_P=2$ observations are acquired before changing the configuration of the SLM, \ie $P/N = 0.15$. We sequentially set the shift $\delta$ to the values $\{0,1,2,3,4,6,8,10\}$ [$\mu$m]. For each, we reconstruct the estimate $\bs{\hat f}$ of $\bs f$ by solving the debiased model~\cref{eq:dist_regularized_tilde_TGV_CV_debiased} with $(\bs{\tilde \Phi},\bs{\tilde y})$. Exactly like for the SI strategy, $g_1$ in~\cref{eq:dist_regularized_tilde_TGV_CV_debiased} is the TV-norm.

We also conduct a second simulation where we compare the PSS strategies for different values of $M_P$. We simulate and reconstruct observations as described in the previous paragraph but with fixed value of $\delta$ (chosen according to the result of the first experiment), for $M/N\in\{0.1,0.2,\dots,1\}$ and $M_P \in \{1,2,4,8,16,32,64\}$. The value $M_P=1$ corresponds to the SI strategy. We generate the observations according to~\cref{eq:discrete_model_03} with $\sigma^2$ set to reach a measurement SNR of 40\,dB. The speckles are generated for a Fermat's spiral core arrangement ($J=120$ cores and diameter $D=113\,\mu$m) with the following parameters: $\lambda=1\,\mu$m, $z=500\,\mu$m, $3 \sigma_c = 3.2\,\mu$m and a pixel pitch $\varpi = 2\,\mu$m. The time $t_\text{SLM}$ is set to 100 ms, slightly higher than the actual SLM time \cite{HamamatsuSLM}, and $t_\text{galva}$ is set to 100$\,\mu$s \cite{CambridgeTechnologyInc.2019}.
The reconstruction is performed as already described in the SI simulations (see~\cref{sec:UCS-Simulations}).

\subsubsection{Results}

\begin{figure}
	\centering
	\subfloat{
          \includegraphics{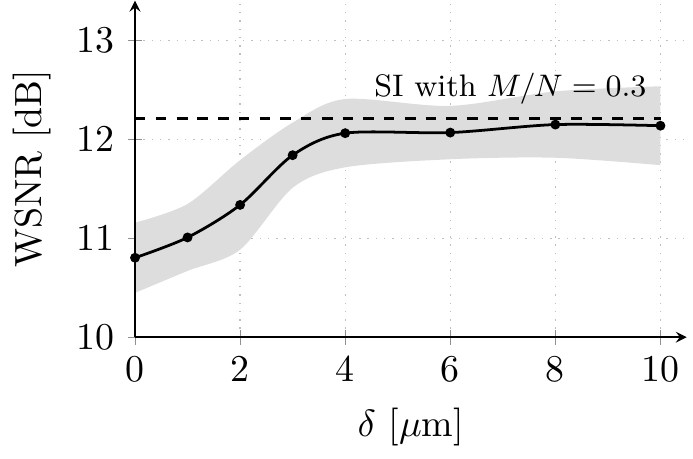}	}\hfill
	\subfloat{
\includegraphics{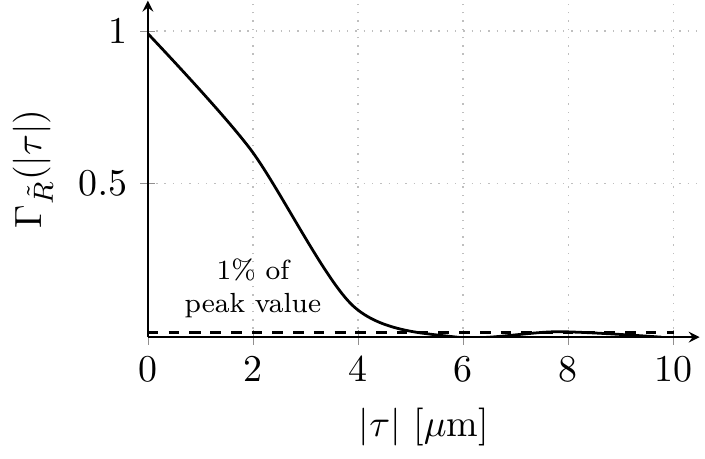}	}
	\caption{\label{fig:SCS_dmin}\textbf{PSS strategy: choice of $\bs \delta$.} (Left) Mean WSNR (over 25 trials) of the restored USAF target \emph{versus} the shift between two consecutive replicas of the same speckle. 
	The gray area represents the standard deviation. Synthetic observations were generated according to \cref{eq:discrete_model_03} (SNR of $\bs\Phi \bs f$ equal to 40\,dB). Each illumination pattern was used twice and $M/N=0.3$. (Right) Theoretical autocorrelation of the residual field $\tilde R$ defined in \cref{eq:gamma_R_final} as a function of $|\tau|$ (extracted from \cref{fig:autocorrelations}).
	}
\end{figure}

Results in~\cref{fig:SCS_dmin} suggest to consider a shift $\delta\ge4\,\mu\text{m}$. In this case, the spatial correlation between shifted patterns (related to the autocorrelation of field $\tilde R$ and consequently to the speckle grain as detailed in \cref{sec:speckl-illum}) is close to zero and the quality of the estimate is similar to the one obtained with the SI strategy. 

\begin{figure}
	\centering
	\subfloat{
\includegraphics{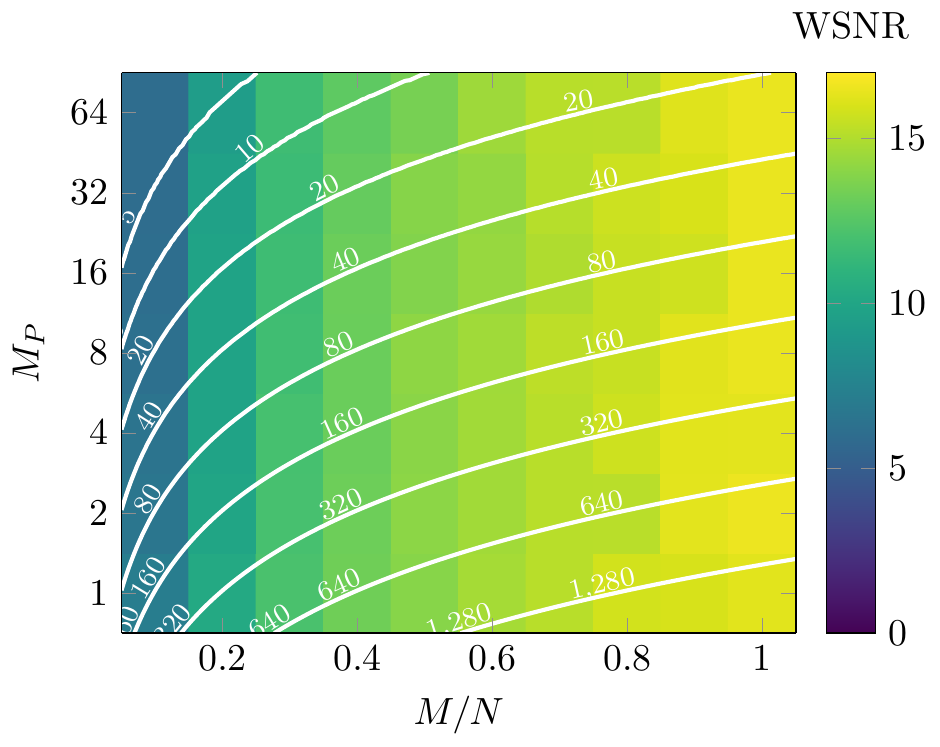}	}
	\\
	\subfloat{\raisebox{0.1\linewidth}{\rotatebox[origin=c]{90}{\small$t_\text{acq}\approx 20$\,s}}
	}
	\subfloat[$M_P = 8$]{
\includegraphics{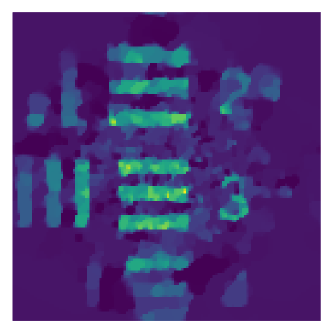}	}
	\subfloat[$M_P = 16$]{
\includegraphics{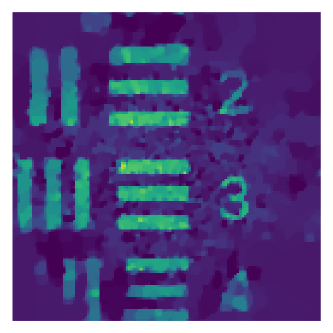}	}
	\subfloat[$M_P = 32$]{
\includegraphics{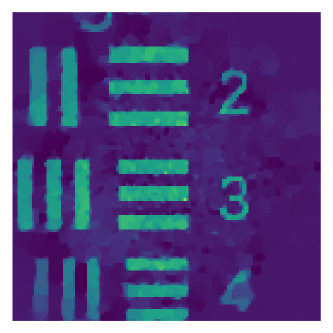}	}
	\subfloat[$M_P = 64$]{
\includegraphics{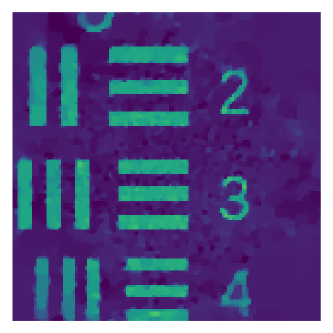}	}
	\caption{\label{fig:WSNR_tacq_level_curves_d2}\textbf{PSS strategy.} Above: mean WSNR (over 10 trials) of the restored USAF target \emph{versus} $M_P$ and $M/N$ ratio. Synthetic observations were generated according to \cref{eq:discrete_model_03} with AWGN (SNR of $\bs\Phi \bs f$ equal to $40$\,dB). Number $M_P$ of replicas of the same (shifted) speckle pattern belongs to $\{1,2,4,8,16,32,64\}$ and shift $\delta$ between replicas is $4\,\mu$m. Level curves of the acquisition time $t_\text{acq}$ (in seconds) are superimposed to the SNR (white solid lines). Below: estimates of $\bs f$ obtained for $M_P \in\{8,16,32,64\}$ and $t_\text{acq}\approx 20$\,s. The PSS strategy with $M_P = 1$ corresponds to the SI strategy with $(\tilde{\bs \Phi},\tilde{\bs y})$ (see \cref{fig:noise40_UCS_comparison}). Estimates share the same intensity scale.}
\end{figure}

\cref{fig:WSNR_tacq_level_curves_d2} shows the results of the second experiment performed with $\delta=4\,\mu$m. Regarding~\cref{fig:SCS_dmin} and the choice of $\delta$, it is not surprising to observe a reconstruction quality similar for all $M_P$ values at fixed $M/N$. However, even if the reconstruction quality is similar for a given $M/N$ ratio, we are more interested in considering the quality obtained for a fixed acquisition time $t_\text{acq}$. To minimize the photo-bleaching of the sample, we would like $t_\text{acq}$ as small as possible while keeping high WSNR. In this case, the best choice is the PSS strategy with $M_P=64$: we achieve high WSNR (around 17\,dB) with an acquisition time around 30\,s.

\begin{figure}
	\centering
	\subfloat{
\includegraphics{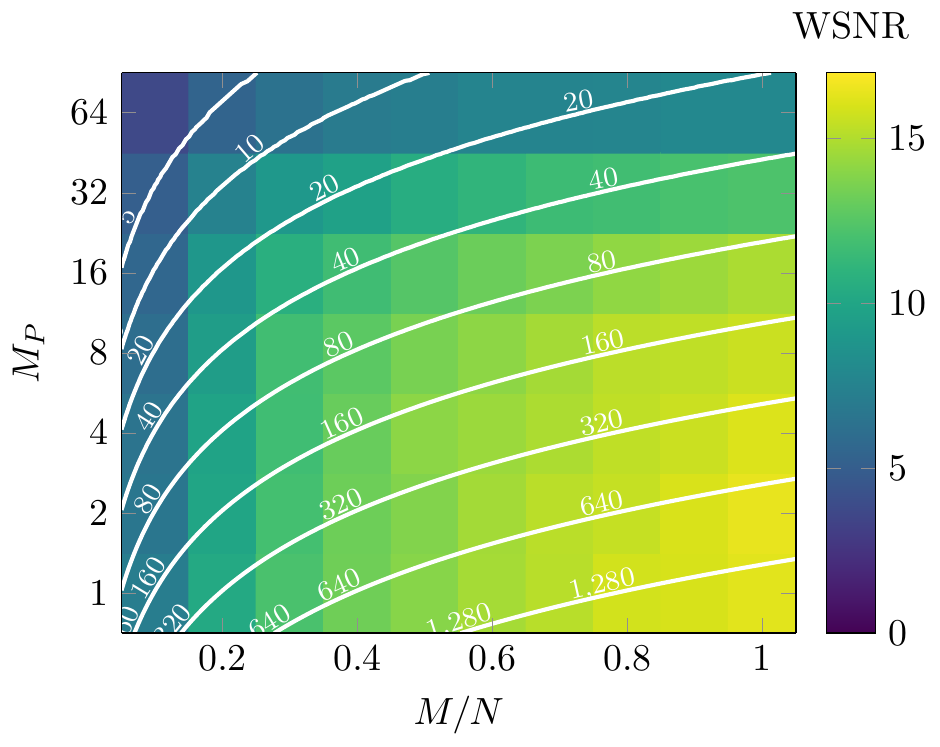}	}
	\\
	\subfloat{\raisebox{0.1\linewidth}{\rotatebox[origin=c]{90}{\small$t_\text{acq}\approx 20$\,s}}
	}
	\subfloat[$M_P = 8$]{
\includegraphics{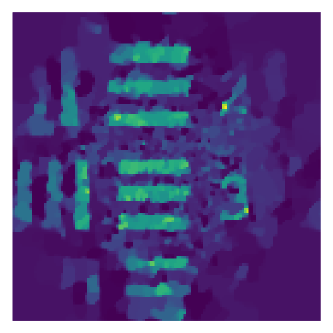}	}
	\subfloat[$M_P = 16$]{
\includegraphics{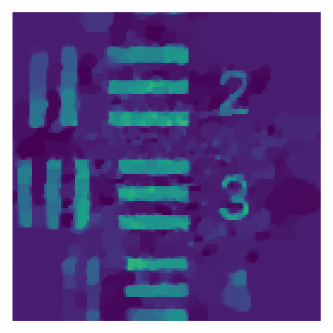}	}
	\subfloat[$M_P = 32$]{
\includegraphics{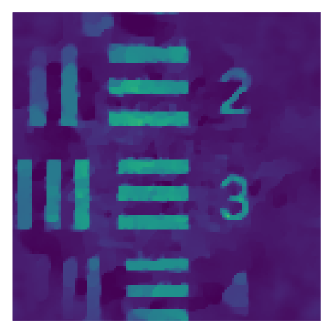}	}
	\subfloat[$M_P = 64$]{
\includegraphics{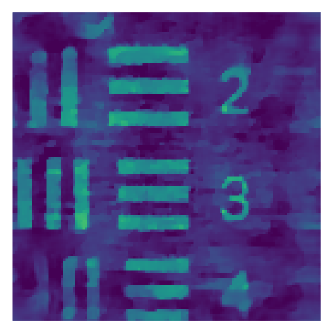}	}
	\caption{\label{fig:WSNR_tacq_level_curves_d2-approx}\textbf{PSS approximated strategy.} Above: mean WSNR (over 10 trials) of the restored USAF target \emph{versus} $M_P$ and $M/N$ ratio. Synthetic observations were generated according to \cref{eq:discrete_model_03} with AWGN (SNR of $\bs\Phi \bs f$ equal to $40$\,dB). Number $M_P$ of replicas of the same (shifted) speckle pattern belongs to $\{1,2,4,8,16,32,64\}$ and shift $\delta$ between replicas is $4\,\mu$m. Sensing matrix $\tilde{\bs \Phi}$ was approximated by a block-circulant operator (instead of a block-Toepliz operator since the FOV is limited) for the reconstruction. Level curves of the acquisition time $t_\text{acq}$ (in seconds) are superimposed to the SNR (white solid lines). Below: estimates of $\bs f$ obtained for $M_P \in\{8,16,32,64\}$ and $t_\text{acq}\approx 20$\,s. The PSS strategy with $M_P = 1$ corresponds to the SI strategy with $(\tilde{\bs \Phi},\tilde{\bs y})$ (see \cref{fig:noise40_UCS_comparison}).}
      \end{figure}

\begin{remark}[Approximation of the sensing matrix]
  \label{rem:approx_phi}
  Up to now, the proposed PSS strategy only decreases the acquisition time. Compared to SI for the same number of measurements, the number of speckles to be recorded to form $\bs{\tilde \Phi}$ is identical. Moreover, the reconstruction time is similar between SI and PSS since the complexity (in $\cl O(MN)$) of the matrix-vector product involving $\bs{\tilde \Phi}$ is not optimized. By considering the translation rule \cref{eq:Sx_tilde_translation}, this can be potentially improved by approximating $\tilde{\bs\Phi}$ with a block-Toepliz matrix $\tilde{\bs\Phi}_{\rm app}$. Then, each matrix-vector product involving this new matrix can benefit from the FFT and the complexity is then reduced to $\cl O(N \log N)$. The previous storage of $M$ speckle patterns is reduced to $P$ patterns. We test this matrix approximation in \cref{fig:WSNR_tacq_level_curves_d2-approx}. This figure is the same as~\cref{fig:WSNR_tacq_level_curves_d2} but for a reconstruction performed with $\tilde{\bs\Phi}_{\rm app}$. As expected, when $M_P$ increases, $\|\bs\theta_{M_P}\| = M_P \delta$ takes bigger values and the paraxial approximation is less respected. The error made by approximation~\cref{eq:Sx_tilde_translation} increases. This leads to a decrease in the reconstruction quality. However, for $M_P\le 16$, the reconstruction quality is still similar to the one of the SI strategy (around $15$\,dB). Therefore, in this approximation of $\bs{\tilde \Phi}$ by $\bs{\tilde \Phi}_{\rm app}$, one must find a trade-off between a fast acquisition time and the quality of the produced estimate. We anyway postpone this analysis for a future study, and we only consider the original matrix $\bs{\tilde \Phi}$ in our following experiments.
\end{remark}

\section{Fluorescence imaging experiments}
\label{sec:experiments}
In this last section, we apply the SI and PSS methods to an actual lensless endoscope in the context of fluorescence imaging. We first describe the experimental setup, material and methods, before explaining how these sensing strategies can improve the quality of the reconstructed images compared to the RS technique.   

\subsection{Experimental setup}
\label{sec:experimental_setup}
A simplified view of the experimental setup is presented in \cref{fig:setup} and described below in functional blocks. 

\begin{figure}
	\centering
\includegraphics{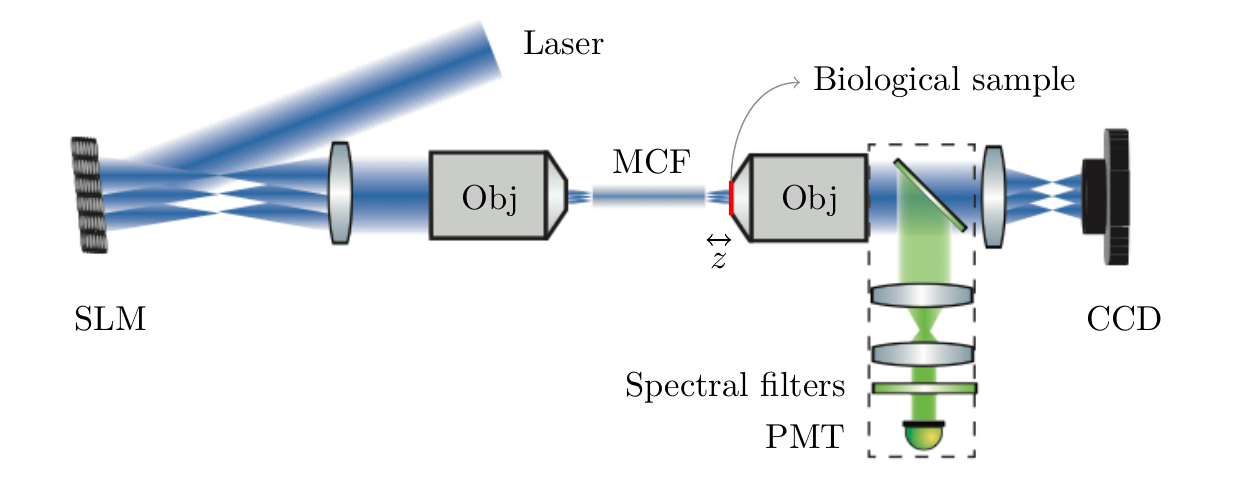}	\caption{\label{fig:setup}Simplified view of the experimental setup used to characterize the performance of the CS-based acquisition strategies. Relay optics (Obj) image the spatial light modulator (SLM) to the proximal endface of the multicore fiber (MCF). An imaging system with two channels is placed at a distance $z$ of the MCF distal end. The main purpose of the first channel is to image the speckle patterns with a charge-coupled device (CCD) sensor. The second channel is dedicated to observations collection through a single pixel detector or photomultiplier tube (PMT).}
      \end{figure}
      
\paragraph{Spatial light modulator}
A continous wave laser operating at 491\,nm (Cobolt lasers, Sweden) is expanded and impinges upon a liquid crystal SLM (X10468-03, Hamamatsu, Japan). A set of relay optics, depicted by a single lens and objectives $Obj$ in \cref{fig:setup}, images the SLM  to the proximal endface of the MCF. In order to maximize the injection of the light beams into the individual cores, a convex lenslet array whose centers are matched to the individual cores is displayed on the SLM. This results in a beamlet array which is efficiently coupled into the MCF and whose relative phases can be tuned independently with the SLM.

\paragraph{Multicore fiber}
 Imaging through the MCF is analogous to phased arrays for beamforming where the relative phases between the cores (antennae) can be calibrated and tuned to generate and shift a focused beam. The cores of these fibers are single mode at the operating wavelength and exhibit an inter-core coupling term less than 20\,dB \cite{Sivankutty2018}. These factors have two important advantages: (\emph{i}) operations such as RS or defocusing can be performed with conventional optical elements at the distal end, and (\emph{ii}) the resulting speckle patterns are highly resilient to external perturbations both thermal and mechanical (except for a global shift). These are significant advantages over single multimode fiber (MMF) where the significant off-diagonal coupling terms preclude stability and fast imaging with conventional optical elements. For a more comprehensive discussion of the imaging properties of the golden spiral MCFs, we refer the reader to \cite{Sivankutty2018}.

\paragraph{Generation of the speckle patterns}
Multiple illumination patterns are generated a few hundred microns away from the distal end as a combination of (\emph{i}) randomizing the relative phase of the injected beamlets into the MCF  resulting in a speckle pattern at the distal end of the fiber, and (\emph{ii}) translations  of the speckle pattern with a global tilt of the beamlets.

\paragraph{Calibration and imaging}
The distal end of the MCF is placed at the focal plane of a second imaging system with two channels: one imaging the distal end onto the camera (CCD) and a second one detecting the signal of interest (PMT). The first channel serves for the recording of the sensing matrix, the visual inspection of the samples and the generation of an image close to the ground truth. The sensing matrix is populated by acquiring a multi-exposure image of each speckle pattern and fusing them to generate a synthetic high-dynamic range. 
The second channel employs a single pixel detector (PMT) (H7240-50, Hamamatsu, Japan) upon which the signal is detected. In the following experiments, the sample is either a standard USAF transmission target or fluorescent beads. In either case, we operate in a high photon count regime as assumed in \cref{sec:acquisition_model}.

In the interest of maximum flexibility, we employ the SLM to function as a series of optical components such as a microlens array to maximize the coupling into the fiber, and as a galvanometric scanner. This also allows us to test and compare between  the proposed imaging scheme and conventional RS techniques. However, for the speckle illumination based compressive imaging proposed in the paper, the SLM can be replaced with conventional optical elements such as microlens arrays, mirror scanners and thin diffusers as in our earlier works~\cite{Andresen2013}. 

\subsection{Material and methods}
\label{sec:method}

Samples used for the experiments are either standard \texttt{USAF} transmission target (see \cref{fig:exp_GT}) or 5\,$\mu$m fluorescent \texttt{Beads}. \texttt{USAF} sample itself is not fluorescent: it is a layer of metal deposited on glass where the features (the bars) are transparent. To make it fluorescent, we apply a layer of highlighter on top of it. We use the setup depicted in \cref{fig:setup} and described in \cref{sec:experimental_setup} with a Fermat's golden spiral MCF containing $J=118$ cores and with standard deviation $\sigma_c = 0.8\,\mu\text{m}$ (or, equivalently, $d = 1.9\,\mu\text{m}$). The distance between the endface of the fiber and the sample plane is $z = 500\,\mu\text{m}$.

\begin{figure}
	\centering
	\subfloat[\texttt{USAF \#1}]{
\includegraphics{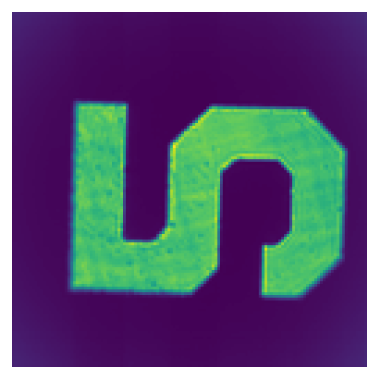}	}
	\subfloat[\texttt{USAF \#2}]{
\includegraphics{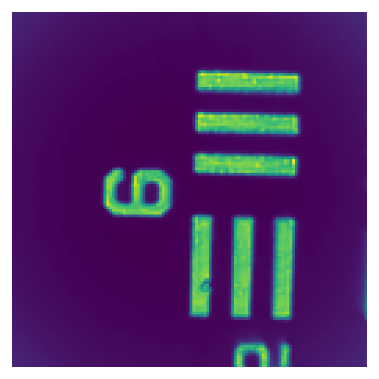}	}
	\subfloat[Mean speckle field]{
\includegraphics{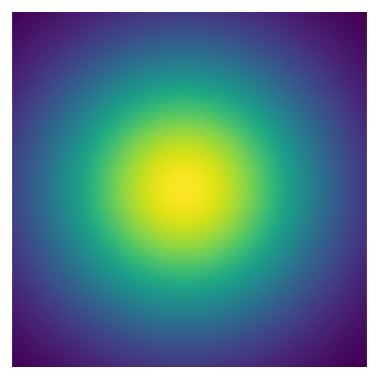}	}
	\caption{\textbf{Estimates of \texttt{USAF} ground truth.} Images of standard USAF transmission target. They are estimated using the first channel of the imaging system (see \cref{fig:setup}). The CCD sensor acquires $M$ images of the product between a speckle and density map $f$. Their average divided by the mean speckle field (Gaussian fit estimated from the $M$ light patterns) leads to the estimated ground truths.}
	\label{fig:exp_GT}
      \end{figure}
      
\subsubsection{Acquisition and reconstruction}
In all experiments, the data acquisition follows those two steps: (\emph{i}) recording the speckle patterns with the CCD sensor to build and store the sensing matrix $\bs \Phi$ (pixel size equal to $2.2\,\mu$m), and (\emph{ii}) illuminating the sample with $M$ speckles and measuring the signal on the single pixel detector. Light patterns are either all different from each other (SI strategy) or shifted versions of each other (PSS strategies) due to the application of global tilt terms on the SLM ($\delta=1.1\,\mu$m). 

The first experiment was designed to compare the SI and PSS strategies: $M=4096$ measurements of \texttt{Beads \#3} sample were acquired for $M_P=1$ (SI) and then $M_P=64$ (PSS).
The second experiment acquired $M=4096$ observations of two different parts of the \texttt{USAF} target (see \cref{fig:exp_GT}) with $M_P = 64$. Finally, $M=4096$ observations of three other \texttt{Beads} samples were acquired with $M_P=64$.

Unlike the number of observations $M$ that is an acquisition parameter, the number of pixels $N$ of the estimate (or equivalently, the pixel pitch $\varpi$) can be chosen after the acquisition process. Ideally, $\varpi$ should match the diffraction limited point spread function of the device, \ie the speckle grain size. In this case, we can take full advantage of the CS-friendly statistical properties of the sensing matrix and we avoid spurious correlations in the estimate. 
For the considered experimental parameters, the average grain size is $r \approx 3.5\,\mu$m (see \cref{sec:speckl-illum}).
We thus set the pixel pitch $\varpi$ to $r$, and select $N=80\times 80$ for \texttt{USAF} samples and $N=160\times160$ for \texttt{Beads} samples (original sizes are $128\times 128$ and $256\times 256$, respectively).

We obtained better results by solving a slightly different problem from \cref{eq:dist_regularized_tilde_TGV_CV_debiased}: given the zero-mean sensing matrix $\tilde{\bs \Phi}$ and the debiased observations $\tilde{\bs y}$, we solve
\begin{equation}
	\widehat{\bs{\bar S}\bs f}_\rho \in \underset{\bs u}{\text{arg~min}}~\|\bs R_\text{est}(\sqrt{M}\tilde{\bs \Phi}\bs u - \tilde{\bs y})\|^2 + \rho g_1(\bs u) + \imath_{\R{N}_+} (\bs u),
	\label{eq:dist_regularized_tilde_TGV_CV_debiased_exp}
      \end{equation}
\ie we estimate the \emph{vignetted} fluorophore density map $\bs{\bar S}\bs f$ instead of $\bs f$. This adaptation is possible by setting $g_1$ in \cref{eq:dist_regularized_tilde_TGV_CV_debiased_exp} to the $\TGV$-norm. This norm, which promotes piecewise linear images, is well adapted to our piecewise constant images multiplied by the smooth vignetting~$\bs{\bar s}$. As in the synthetic case, we use the CP algorithm with a maximum number of iterations equal to $5000$ (that was also never reached). The stopping criterion of CP internal iterations is set as described in \cref{sec:choice_parameters} with $M_\text{val} = 256$, while the parameter $\rho$ is chosen by visual inspection. Initial values for $\rho$ are $\rho^{(0)} = 2500$ (\texttt{USAF}) or $\rho^{(0)}  = 10^4$ (\texttt{Beads}) and $\rho^{(k)} = 0.8\rho^{(k-1)}$ for $k>0$ ($k_{\max} = 20$). Initial guess of the algorithm is $\hat{\bs f}_{\rho^{(0)}} = \bs 0_N$ and then $\hat{\bs f}_{\rho^{(k-1)}}$ for $k> 0$.

\subsection{Results}
\label{sec:exp_results}
\begin{figure}
	\captionsetup[subfigure]{justification=centering}
	\centering
	\subfloat[SI Beads 3][\textbf{SI}\\$M = 4096$\\$t_\text{acq} = 409.6$\,s]{
\includegraphics{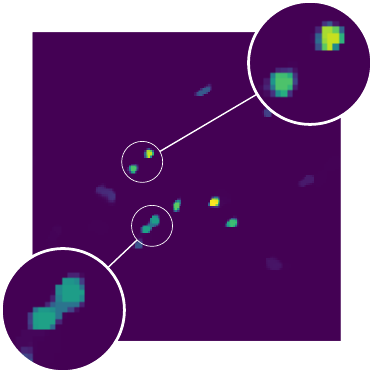}	}
	\subfloat[PSS Beads 3][\textbf{PSS}\\$M = 4096$\\$t_\text{acq} = 6.8$\,s]{
	\includegraphics{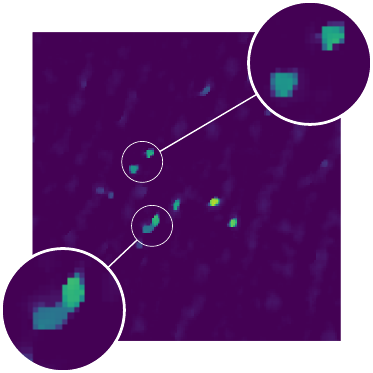}}
	\subfloat[SI Beads 3][\textbf{SI}\\$M = 68$\\$t_\text{acq} = 6.8$\,s]{
          \includegraphics{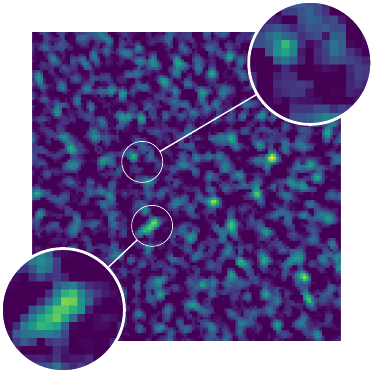}	}
	\caption{\label{fig:exp_results_beads_U-CS}\textbf{Vignetted estimates of \texttt{Beads \#3} sample.} Left and middle images are obtained with two different acquisition strategies (SI and PSS) but the same number of observations $M$ ($\rho=860$ and $\rho = 690$). Middle and right images are again obtained with two different strategies (PSS and SI) but the same acquisition time $t_{\text{acq}} = 6.8\,\text{s}$. Left and middle estimates share the same intensity scale. The intensity of the right estimate is 50 times lower compared to the other two images.}
      \end{figure}

\begin{figure}
	\centering
	\subfloat{\raisebox{0.1\linewidth}{\small\texttt{\#1}}
	}
	\subfloat[$\rho=420$]{
\includegraphics{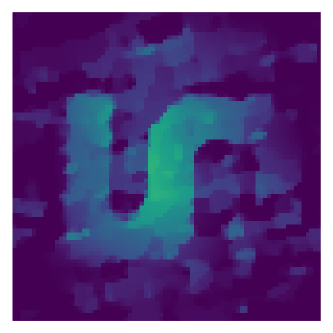}	}
	\subfloat[$\rho=1024$]{
\includegraphics{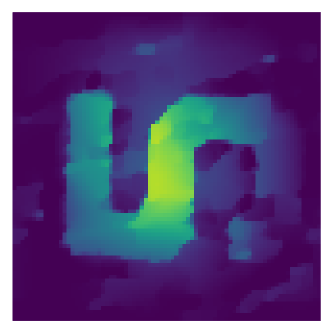}	}
	\subfloat[$\rho=1280$]{
\includegraphics{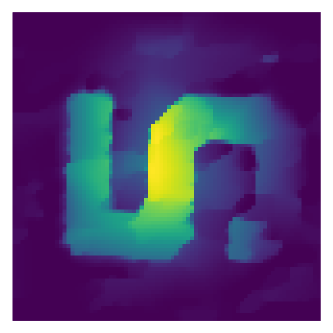}	}
	\\
	\subfloat{\raisebox{0.1\linewidth}{\small\texttt{\#2}}
	}	
	\subfloat[$M/N = 0.3$][$\rho = 655$\\$\bs{M/N=0.3}$]{
\includegraphics{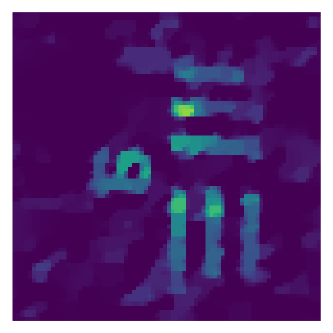}	}
	\subfloat[$M/N = 0.5$][$\rho = 1280$\\$\bs{M/N=0.5}$]{
\includegraphics{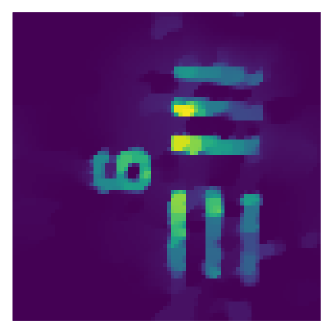}	}
	\subfloat[$M/N = 0.64$][$\rho = 1600$\\$\bs{M/N=0.64}$]{
\includegraphics{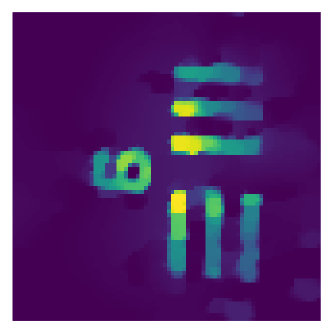}	}
	\caption{\label{fig:exp_results_USAF}\textbf{Vignetted estimates of \texttt{USAF} samples with PSS strategy.} All $80\times 80$ estimates are reconstructed from $M=4096$ observations (corresponding to $M/N = 0.64$) acquired with PSS strategy ($M_P=64$). Estimates for each sample share the same intensity scale. 
	}
      \end{figure}
\begin{figure}
	\centering
	\subfloat{\raisebox{0.1\linewidth}{\rotatebox[origin=c]{90}{\small$M/N=0.16$}}
	}
	\subfloat[Beads 1][\texttt{\#1} \\ $\rho = 440$]{
\includegraphics{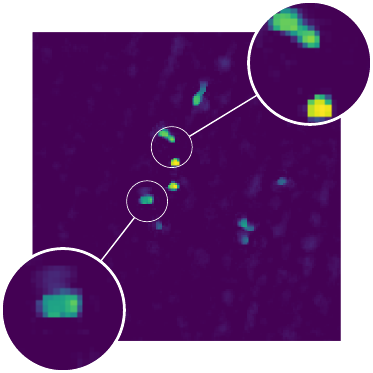}	}
	\subfloat[Beads 2][\texttt{\#2} \\ $\rho = 440$]{
\includegraphics{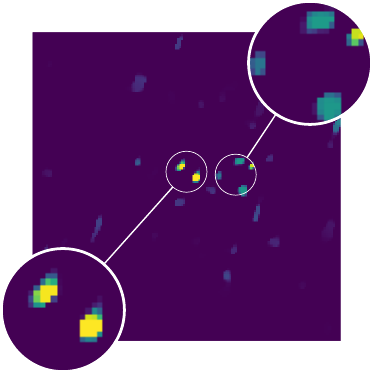}	}
	\subfloat[Beads 4][\texttt{\#4} \\ $\rho = 100$]{
\includegraphics{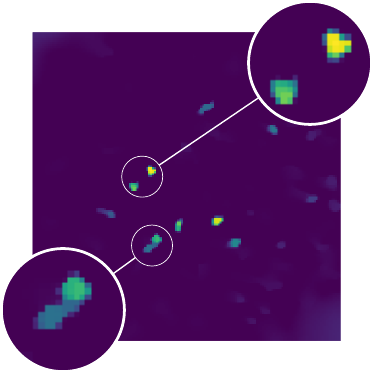}	}
	\caption{\label{fig:exp_results_beads}\textbf{Vignetted estimates of \texttt{Beads} samples (\texttt{\#1}, \texttt{\#2} and \texttt{\#4}).} All $160\times 160$ estimates are reconstructed from $M=4096$ observations acquired with PSS strategy ($M_P=64$). It corresponds to $M/N = 0.16$.}
\end{figure}

Reconstruction results are visible in \cref{fig:exp_results_USAF,fig:exp_results_beads_U-CS,fig:exp_results_beads}. Regarding \cref{fig:exp_results_beads_U-CS}, as expected, the SI strategy provides a better estimate for \texttt{Beads \#3} sample: beads are better resolved and there are less artifacts in the background. However, this quality is obtained at the cost of an acquisition time 60 times longer compared to the PSS strategy. Middle and right images are again obtained with two different strategies (PSS and SI) but the same acquisition time $t_{\text{acq}} = 6.8\,\text{s}$. In this case, the SI strategy fails to reconstruct the beads. 

For \texttt{USAF} samples and \texttt{Beads \#1}, \texttt{\#2} and \texttt{\#4}, we performed the image reconstruction from the observations acquired with PSS strategy for $M/N\in\{0.3,0.5,0.64\}$ (see \cref{fig:exp_results_USAF,fig:exp_results_beads}). We note that in order to minimize photo-bleaching, we keep the incident laser power extremely low (few 100\,s of $\mu$W over the entire FOV). This precludes us from acquiring  ground truth fluorescence images of the beads since the sensitivity of a standard camera is much lower than the single pixel detector.

\section{Conclusion}
\label{sec:perspectives}
We have proposed new sensing methods to image fluorescent object in the context of lensless endoscopy. Our procedure departs from the classical raster scanning imaging by using two constructions leveraging speckle patterns to illuminate the sample, namely the unstructured speckle imaging (SI) and the partial speckle scanning (PSS). Our work also relies on proving that these speckles correspond to random fields following a sub-exponential distribution at each location of the sample plane (see \cref{sec:speckle-distrib}). After normalization and discretization, they are thus good candidates to build efficient sensing matrices, such as sub-exponential random matrices~\cite{Adamczak2011}. 

To use speckle illumination to collect observations, several challenges must be solved with regard to the design of the fiber, the acquisition strategy, as well as the reconstruction scheme.  

First, the arrangement of the single mode cores must be optimized to achieve narrow autocorrelation of the speckle field (see \cref{sec:speckl-illum}), \ie a grain size as small as possible, and to minimize the magnitude of the side lobes. Fermat's golden spiral shape shows a low side lobes level and a very good contrast between the intensities of the central peak and the side lobes \cite{Sivankutty2018}. A small grain size combined with an appropriate choice of the pixel pitch leads to a sensing matrix with nearly independent columns, a desirable property in CS.

Second, the acquisition strategy must be thought to minimize the acquisition time to avoid as far as possible the loss of fluorescence of the sample and to reach a frame rate suitable for \emph{in vivo} imaging of cellular processes (\eg the propagation of nerve impulses). The SI strategy offers good reconstruction quality for far fewer measurements compared to RS (see \cref{fig:noise40_UCS_comparison}). However, the corresponding acquisition time is unrealistic for real biological applications, as requiring to change speckle patterns using slow SLM (with a frame rate of the order of $10\,{\rm ms}$). We overcome this limitation using the PSS strategy exploiting the memory effect of the fiber via the use of scan mirrors. Their rate is around 1,000 times higher compared to a change of the SLM configuration. With an appropriate value for the speckle shift $\delta$ (see \cref{fig:SCS_dmin}), we reach a reconstruction quality similar to that of SI (see \cref{fig:WSNR_tacq_level_curves_d2}). 

Finally, the reconstruction scheme, \ie the formulation and solving of the inverse problem, must encode prior information on the structure of the fluorophore density map. We proposed a debiased formulation including the minimization of TV or $\TGV$-norm. We showed that the sensing matrix can be approximated by a block-circulant matrix to reduce the number of stored patterns. When $M_P \le 16$, the reconstruction quality is still close to SI (see \cref{fig:WSNR_tacq_level_curves_d2-approx}).

As a perspective, let us mention that we only considered line scanning for the PSS strategy. However, if we use two scan mirrors, other scanning trajectories could be considered to further minimize the number of measurements compared to traditional RS. If the mirrors are driven parallely, there would be no increase in the acquisition time. In this case, we could perform speckle scanning in a way similar to RS: a single speckle would  scan $M\le N$ positions in the field of view (FOV). In addition, recent advances in ultrafast scanners employing acousto-optic deflectors~\cite{reddy2008three} and resonant galvanometers~\cite{fan1999video,Andresen2013} will serve to speed up the acquisitions both in the case of focused or speckle illumination. Faster spatial light modulators such as digital micromirror devices~\cite{geng2017digital} and deformable mirrors~\cite{bifano1999microelectromechanical} can also provide a massive speedup to SI, albeit at the cost of experimental integration.

The 2-D setup presented in this work is quite unrealistic for real \emph{in vivo} imaging: (\emph{i}) in practice, the LE will be required to image 3-D volumes and (\emph{ii}) we will not have access to the fiber endface to image the speckle pattern. Regarding those challenges, we would like to highlight two interesting research directions. The first one is the design of a 3-D acquisition and reconstruction framework. One of the challenges will be to deal with the depth dependency of the speckle autocorrelation \cite{Pascucci2017}. The second one is the study of blind imaging taking advantage of the memory effect like in \cite{Bertolotti2012,Stasio2016}. In this case, the estimation is based on the autocorrelations of the measurements vector and of the discretized speckle field (known \emph{a priori}). Another solution to deal with the inaccessibility to the fiber endface would be to perform a single initial calibration of the field and then, compute it. Given the high resilience of the fiber to bending, this would alleviate the tedious calibration of mulitple intensity patterns. This would provide a route to the deployment of LEs in a more realistic medical or diagnostic environment. 

\appendix
\section{Total variation and second-order total generalized variation}
\label{app:TGV}
The total variation (TV) and the total generalized variation (TGV) of an image are both quantities related to the first- or higher-order derivatives of this image. We start by defining the discrete \emph{gradient} operator $\nabla$ and the \emph{symmetrized derivative} operator $\varepsilon$.

\begin{definition}[from \cite{Guerit2015a}]
The general discrete gradient operator $\nabla$ applicable to $N\times k$ tensor fields is defined as
\begin{equation*}
	\nabla : \R{N\times k} \to \R{N\times 2k},~ \bs u \mapsto (\nabla \bs u) \coloneqq (\nabla_1 \bs u, \nabla_2 \bs u),
\end{equation*}
with $k\in\mathbb{N}_0$ and $\nabla_i \in \R{N\times N}$ the first spatial derivative of the tensor field in direction $\boldsymbol e_i$, aligned with axis $i$ of the 2-D image.
\end{definition}
				
\begin{definition}[from \cite{Guerit2015a}]
The symmetrized derivative operator $\varepsilon$ is 
\begin{equation*}
	\varepsilon : \R{N\times 2} \to \R{N\times 4},~ \bs u \mapsto \varepsilon(\bs u) \coloneqq \frac{1}{2}\left((\nabla\bs u) + (\nabla \bs u) S_{23}\right),
\end{equation*}
where $S_{23} \in \{0,1\}^{4\times 4}$ is a matrix permuting the second and the third columns of $(\nabla \bs u)$. 
\end{definition}
Applying $\varepsilon$ on the gradient of an image provides information about its second derivative. The concept of symmetrized tensors is detailed in \cite{Bredies2010}.

\begin{definition}
Let $\bs u_i \in\R{k}$ be the $i^{\text{th}}$ row of $\bs u$. The $L_{p,q}$-norm of $\bs u \in \R{N\times k}$ is defined as 
	$$\|\bs u\|_{p,q} = \left(\sum_{i=1}^N\|\bs u_i\|_p^q \right)^{\frac{1}{q}}.$$
\end{definition}

Regarding the previous definitions, the TV-norm of $\bs u \in\R{N}$ is defined in a discrete setting as the $\ell_1$-norm of the gradient magnitude of $\bs u$,
\begin{equation}
	\text{TV} : \R{N} \to \R{},~ \bs u \mapsto\text{TV}(\bs u) \coloneqq \|\nabla \bs u\|_{2,1}.
	\label{eq:TV_definition}
\end{equation}
Minimizing the TV-norm of an image in an estimation problem like \cref{eq:dist_regularized_tilde_TGV_CV_debiased} leads to piecewise constant estimate. If the original image is not piecewise constant, staircasing artifacts will appear \cite{Bredies2010}. To alleviate those limitations when working with real, piecewise smooth, images, we can resort to the TGV-norm. This norm was introduced by Bredies et al. \cite{Bredies2010} in 2010 and can be considered as the generalization of TV to higher-order image derivatives. The second-order $\TGV$-norm is defined as 
\begin{equation}
	\TGV : \R{N} \to \R{},~ \bs u \mapsto\TGV(\bs u) \coloneqq \min_{\bs w\in \R{N\times 2}}~\|\nabla \bs u - \bs w \|_{2,1} + \alpha \|\varepsilon(\bs w)\|_{2,1},
	\label{eq:TGV_definition}
\end{equation}
	where $\alpha > 0$ is a parameter making a trade-off between the edge-preserving term and the smoothness-promoting term. 
	Deriving $\TGV(\bs u)$ is not as easy as TV$(\bs u)$ because an additional minimization problem has to be solved. We re-write \cref{eq:dist_regularized_tilde_TGV_CV_debiased} in the following way,
\begin{equation*}
	\bs{\hat z}_\rho \in \underset{\bs z}{\text{arg~min}}~\|\bs R_\text{est}(\sqrt{M}\tilde{\bs \Phi} \bar{\bs S}\bs R_{\bs u}\bs z - \tilde{\bs y})\|^2 + \rho \|(\nabla \bs R_{\bs u} - \bs R_{\bs w}) \bs z \|_{2,1} + \rho \alpha\|\varepsilon(\bs R_{\bs w}\bs z)\|_{2,1}  + \imath_{\R{N}_+} (\bs R_{\bs u}\bs z),
\end{equation*}	
where $\bs z = (\bs u, \bs w)$, $\bs R_{\bs u}$ and $\bs R_{\bs w}$  are restriction operators keeping only the first column of $\bs z$ and the last two columns of $\bs z$, respectively. $\bs{\hat f}_\rho$ is given by $\bs R_{\bs u} \bs{\hat z}_\rho$.
\section*{Funding}

Part of this research is funded by the FNRS under Grant n$^\circ$ T.0136.20 (project Learn2Sense).

Projects related to lensless endoscopy in Fresnel Institute are supported by Agence Nationale de la Recherche FR NAIMA and National Institutes of Health US R21EB025389 \& R21MH117786.

\end{document}